\DeclareMathOperator{\tr}{tr}
\DeclareMathOperator{\pr}{pr}
 \newcommand{\R}        {\mathds{R}}
\newtheorem{deff}{Definition} 
\newtheorem{thm}{Theorem}
\newtheorem{lem}{Lemma}
\begin{document}
\title{Future global non-linear stability of surface symmetric solutions of the Einstein-Vlasov system with a cosmological constant}
\author[1,2]{Ernesto Nungesser\footnote{ernesto@maths.tcd.ie}}
\affil[1]{School of Mathematics, Trinity College, Dublin 2, Ireland}
\affil[2]{Department of Mathematics, Royal Institute of Technology, 10044~Stockholm, Sweden}\maketitle

\begin{abstract}
We show future global non-linear stability of surface symmetric solutions of the Einstein-Vlasov system with a positive cosmological constant.
Estimates of higher derivatives of the metric and the matter terms are obtained using an inductive argument.
In a recent research monograph Ringstr\"{o}m shows future non-linear stability of (not necessarily symmetric) solutions of the Einstein-Vlasov system with 
a non-linear scalar field if certain local estimates on the geometry and the matter terms are fulfilled. We show that these assumptions are satisfied 
at late times for the case under consideration here which together with Cauchy stability leads to our main conclusion.
\end{abstract}
\section{Introduction}
In the present paper we deal with the surface symmetric Einstein-Vlasov system with a positive cosmological constant $\Lambda$. Note that for a negative cosmological constant 
no solution exists cf. Proposition 4.1 in \cite{Tchap2}. We refer to \cite{Rein} for the case of a vanishing cosmological contant. For an introduction to the terminology of surface symmetric spacetimes and some simple special solutions see \cite{rendallcrush}. We will use areal coordinates, i.e. the coordinates are chosen such that the time coordinate equals the so called area radius of a surface of symmetry. The circumstances under which coordinates of this type exist were established in \cite{ARR} for the case of vanishing cosmological constant and extended to the case of positive cosmological constant in \cite{Tchap2,Tchap1}. Future geodesic completeness has been proven for all solutions
with hyperbolic and plane symmetry and also information about the asymptotic behaviour at late times has been obtained cf. Theorems 3.2 and 3.3 in \cite{Tchap1}.
The proof is built on the local existence theorem and continuation criterion previously proved in \cite{Tchap2}. In the case of spherical symmetry, if $t_0>\Lambda^{-\frac12}$
where $t_0$ has to be understood as the initial area radius, future geodesic completeness and information about the asymptotic behaviour at late times could also be shown,  
cf. Theorem 5.2 of \cite{Tchap2}. In that paper an estimate on the decay of the support of the distribution function
and estimates on the matter terms where also shown,
cf. Lemma 6.1 and Proposition 6.2 of \cite{Tchap2}. The differences between the various cases are related to the fact that while an initially expanding solution with hyperbolic
or plane symmetry can be expected to expand forever, a spherically symmetric solution can be expected to recollapse. 
Another ingredient in our paper is a certain combination of terms concerning the characteristic system found in \cite{Rein}, which we are able to control. 
These results form the basis we start with. 

Estimates of higher derivatives of the metric and the matter terms are obtained using an inductive argument. 
In Theorem 7.16  of \cite{Hans} future non-linear stability of (not necessarily symmetric) solutions of the Einstein-Vlasov system 
with a non-linear scalar field is shown, if certain local estimates on the geometry and the matter terms are fulfilled.
The analysis of higher order terms is necessary because the goal is to show that for late enough $t$, 
there is a neighbourhood of each point, such that Theorem 7.16 of \cite{Hans} applies in the neighbourhood which together with Cauchy stability 
yields the desired result. Estimates of higher order are required due to the choice of the spaces of initial data and the associated distance
concepts. This is related to the fact that the solution to the initial value problem of wave equations does not depend continuously on the initial data with respect to,
e.g. the $C^k$-norm (see discussion in Section 2.2 of \cite{Hans}).

The paper is organized as follows. In the next section we briefly introduce the initial value formulation of the Einstein-Vlasov system with a cosmological constant and introduce
some function spaces. For an introduction to the Einstein-Vlasov system we refer to \cite{Hans}. In particular we have taken several definitions from Chapter 7 of \cite{Hans} and
we refer the reader interested in details to that monograph. In Section 3 we introduce the basic equations of the Einstein-Vlasov-system with a cosmological constant and surface
symmetry and collect previous results concerning the asymptotic behaviour of surface symmetric solutions of this system. In the following sections we have benefited from the analysis
presented in \cite{HakanHans} where $\mathbb{T}^2$-symmetric solutions have been analyzed (see \cite{PreHans} for a recent development).
In Section 4 we obtain estimates of higher order which we use in Section 5 to establish some
geometric consequences and the cosmic no-hair theorem. Consequences for the distribution function and our stability result are presented in Sections 6 and 7.

\section{The Einstein-Vlasov system with a cosmological constant}
A cosmological model represents a universe at a certain averaging scale. It is described via a Lorentzian metric $g$
 (we will use signature -- + + +) on a manifold $M$ and a family of fundamental observers. The interaction between the geometry and the matter is described by 
 the Einstein field equations (we use geometrized units i.e., the gravitational constant and the speed of light in vacuum $c$ are set equal to one):
\begin{eqnarray}\label{Einstein}
G+\Lambda g = 8\pi T,
\end{eqnarray}
where $G$ is the Einstein tensor, $T$ is the energy-momentum tensor and $\Lambda>0$ is the cosmological constant. The Einstein summation convention
 that repeated indices are to be summed over is used. Latin indices run from one to three and Greek ones from zero to three.

For the matter model we will take the point of view of kinetic theory. We will consider particles with unit rest mass which move under the influence of the gravitational field. 
The distribution function which gives a statistical description of an ensemble of collections of particles is described by a non-negative real valued function 
$f: P \rightarrow [0,\infty)$ on the mass shell $P$, defined to be the set of future directed unit time like vectors. 
The component $p^0$ of an element $p\in{P}$ can be expressed in terms of the other components due to the mass shell relation:
\begin{eqnarray*}
 p_{\alpha}p^{\alpha}=-1,
\end{eqnarray*}
and will be thus considered as a dependent variable. The relation between $f$, the distribution function on the mass shell, and $\bar{f}$, the distribution function on the initial hypersurface $\Sigma$ induced by $f$, 
is given by the following map:
\begin{eqnarray*}
 \bar{f}=f \circ \pr_{\Sigma}^{-1},
\end{eqnarray*}
where $\pr_{\Sigma}$ is the diffeomorphism from the mass shell above the hypersurface $\Sigma$ to its tangent space $T \Sigma$, defined by mapping a vector to its component 
perpendicular to the normal of $\Sigma$.  
A free particle travels along a geodesic. The equations of motion define a flow on $P$ which is generated
by a vector field $ \mathcal{L}$ which is called geodesic spray or Liouville operator. 
We will consider the collisionless case which is described via the Vlasov equation:
\begin{align}\label{Vlasov}
 \mathcal{L}(f)=0,
\end{align}
where $\mathcal{L}$ is the restriction of the Liouville operator to the mass shell. It remains to define the energy-momentum tensor $T$ in terms of the distribution function
and the metric.   Define the energy-momentum tensor as follows:
 \begin{eqnarray}\label{em}
T_{\alpha\beta}(\zeta)=\int_{P_{\zeta}} f p_{\alpha}p_{\beta}\varpi_{\zeta},
\end{eqnarray}
where $P_{\zeta}$ denotes the mass shell at the spacetime point $\zeta$, $p_{\alpha}$ denotes the components of the one-form obtained by lowering
the index of $p\in{P_{\zeta}}$ with $g$ and $\varpi_{\zeta}$ is the induced (Riemannian) volume of the mass shell considered as a hypersurface
in the tangent space at $\zeta$. If $f$ satisfies the Vlasov equation (\ref{Vlasov}), then 
the energy momentum tensor (\ref{em}) is divergence-free and thus it is compatible with the Einstein equations (\ref{Einstein}). 
We will assume that $f$ is compactly supported in momentum space.
Given $(M,g)$, we say that $(\mathrm{x},U)$ are \emph{canonical local coordinates} if 
$\partial_{{\mathrm{x}}^0}$ is future directed timelike on $U$ and 
$g (\partial_{{\mathrm{x}}^i}\vert_{\zeta},\partial_{{\mathrm{x}}^j}\vert_{\zeta})$ are the components of a positive definite matrix for all $\zeta \in{U}$. 
Given canonical local coordinates and $p\in{P_{\zeta}}$ for some $\zeta\in{U}$ we define local coordinates $\Xi_{\mathrm{x}}$ on the mass shell by the condition
that
 \begin{eqnarray*}
  \Xi_{\mathrm{x}}(p)=\Xi_{\mathrm{x}}(p^{\alpha}\partial_{\mathrm{x}}^{\alpha}\vert_{\zeta})=(\mathrm{x}(\zeta),\bar{p}),
 \end{eqnarray*}
where $\bar{p}=(p^1,p^2,p^3)$. If $f$ is defined on the mass shell, we will use the notation $\mathrm{f}_{\mathrm{x}}= f \circ \Xi_{\mathrm{x}}^{-1}$. Let us introduce the following
function space (Definition 7.1 of \cite{Hans}):
\begin{deff}
 Let $\frac52 < z\in{\mathds{R}}$, $(M,g)$ be a time oriented $3+1$-dimensional Lorentz manifold and $P$ the associated mass shell. The space $\mathcal{F}^{\infty}_{z}(P)$ is defined
 to consist of the smooth functions  $f: P \rightarrow \mathds{R}$, such that for every choice of canonical local coordinates $(\mathrm{x},U)$, $3+1$-multiindex $\alpha$,
 $3$-multiindex $b$, the derivative $\partial_{\mathrm{x}}^{\alpha}\partial_{\bar{p}}^{b} \mathrm{f}_{\mathrm{x}}$, considered as function from $\mathrm{x}(U)$ to the set of functions from 
 $\mathds{R}^3$ to $\mathds{R}$, belongs to 
 \begin{eqnarray*}
  C[\mathrm{x}(U), L^2_{z+\vert b \vert}(\mathds{R}^3)],
 \end{eqnarray*}
where the space $L^2_z(\mathds{R}^3)$ is the weighted $L^2$-space corresponding to the norm
\begin{eqnarray*}
 \Vert h \Vert_{L^2_z}=\left(\int_{\mathds{R}^3}(1+\bar{p}^2)^z\vert h(\bar{p}) \vert^2 d\bar{p}\right)^{\frac12} .
\end{eqnarray*}
\end{deff}
We need also function spaces for the initial datum for the distribution function.
Let $(\bar{\mathrm{x}},U)$ be local coordinates
on a manifold $\Sigma$ (which will play the role of the initial hypersurface). Then we define local coordinates on $T\Sigma$ via 
$\bar{\Xi}_{\bar{\mathrm{x}}}(\bar{p}^{i}\partial_{\bar{\mathrm{x}}}^{i}\vert_{\bar{\zeta}})=(\bar{\mathrm{x}}(\bar{\zeta}),\bar{p})$. Moreover if $f$ is defined on 
$T\Sigma$, then we use the notation $\mathrm{\bar{f}}_{\bar{\mathrm{x}}}= {\bar{f}} \circ \bar{\Xi}_{\bar{\mathrm{x}}}^{-1}$. This permits us to introduce the analogous
function space for $\bar{f}$, cf. Definition 7.5 of \cite{Hans}:

\begin{deff}
 Let $\frac52 < z\in{\mathds{R}}$ and let $\Sigma$ be a hypersurface. The space $\bar{\mathcal{F}}^{\infty}_{z}(T\Sigma)$ is defined
 to consist of the smooth functions  $f: T\Sigma \rightarrow \mathds{R}$, such that for every choice of local coordinates $(\mathrm{x},U)$, $3$-multiindex $a$,
 $3$-multiindex $b$, the derivative $\partial_{\bar{\mathrm{x}}}^{a}\partial_{\bar{p}}^{b} \mathrm{\bar{f}}_{\bar{\mathrm{x}}}$, considered as function 
 from $\bar{\mathrm{x}}(U)$ to the set of functions from $\mathds{R}^3$ to $\mathds{R}$, belongs to 
 \begin{eqnarray*}
  C[\bar{\mathrm{x}}(U), L^2_{z+\vert b \vert}(\mathds{R}^3)].
 \end{eqnarray*}
\end{deff}
Let us now define what is meant by \emph{initial data}:
\begin{deff}\label{id}
 Let $\frac52 < z\in{\mathds{R}}$. Initial data for (\ref{Einstein})-(\ref{Vlasov}) consist of an oriented hypersurface $\Sigma$, 
 a function $\bar{f}\in \bar{\mathcal{F}}^{\infty}_{z}(T\Sigma)$, a Riemannian metric $\bar{g}$ and a symmetric 2-tensor field $\bar{k}$ on $\Sigma$ all assumed to be smooth
 and to satisfy
 \begin{eqnarray*}
  \bar{R}-\bar{k}_{ij}\bar{k}^{ij}+(\tr \bar{k})^2&=&2\Lambda+2\rho,\\
  \bar{\nabla}^j\bar{k}_{ij}-\bar{\nabla}_i(\tr \bar{k})&=&-\bar{J}_i,
 \end{eqnarray*}
where $\bar{\nabla}$ is the Levi-Civita connection of $\bar{g}$, $\bar{R}$ the associated scalar curvature, indices are raised and lowered by $\bar{g}$, and the energy denity $\rho$ 
and current $\bar{J}$ are given by:
\begin{eqnarray*}
 \rho(\bar{\zeta})&=&\int_{T_{\bar{\zeta}}\Sigma} \bar{f}(\bar{p}) \sqrt{1+\bar{g}(\bar{p},\bar{p})}\varpi_{\bar{\zeta},\bar{g}},\\
 \bar{J}(\bar{r})&=&\int_{T_{\bar{\zeta}}\Sigma} \bar{f}(\bar{p}) \bar{g}(\bar{p},\bar{r}) \varpi_{\bar{\zeta},\bar{g}},
\end{eqnarray*}
where $\bar{\zeta}\in{\Sigma}$, $\bar{r}\in{T_{\bar{\zeta}}\Sigma}$, $\varpi_{\bar{\zeta},\bar{g}}$  is the volume form on $T_{\bar{\zeta}}\Sigma$ induced by $\bar{g}$ and $\bar{p}\in{T_{\bar{\zeta}}\Sigma}$.
\end{deff}
The associated \emph{initial value problem} to these initial data is defined as follows (Definition 7.11 of \cite{Hans}):
\begin{deff}
 Given initial data $(\Sigma,\bar{g},\bar{k})$ as in Definition \ref{id}, the initial value problem is that of finding the triple $(M,g,f)$ to (\ref{Einstein})-(\ref{Vlasov}), and an embedding
 $i:\Sigma \rightarrow M$ such that
 \begin{eqnarray*}
  i^{*}g=\bar{g}, \   \ \bar{f}=i^*(f \circ \pr^{-1}_{i(\Sigma)}),
 \end{eqnarray*}
and if $k$ is the second fundamental form of $i(\Sigma)$, then $i^*k=\bar{k}$. Such triple is refered to as a development of the initial data, the existence of $i$ beeing tacit. If 
$i(\Sigma)$ is a Cauchy hypersurface in $(M,g)$, the triple is called a globally hyperbolic development.
\end{deff}
Moreover there exists a unique maximal globally hyperbolic development of corresponding initial data (cf. \cite{Hans}, Definition 7.14). 

Finally let us introduce a norm which is needed for the stability result (Definition 7.7 of \cite{Hans}):
\begin{eqnarray}\label{norm}
 \Vert \bar{f} \Vert_{H^{l}_{z}}(U)=\left(\sum_{\vert a \vert + \vert b \vert \leq l} \int_{\bar{x}(U)\times {\R^3}} (1+\vert \bar{p} \vert^2)^{z+\vert b \vert}\vert \partial^{a}_{\bar{\zeta}} \partial^{b}_{\bar{p}} \bar{f}_{\bar{x}}\vert^2 (\bar{\zeta},\bar{p})d\bar{\zeta}d\bar{p}\right)^{\frac12}. 
\end{eqnarray}


\section{The Einstein-Vlasov system with surface symmetry}

Surface symmetric spacetimes are defined on manifolds of the form $M=\mathbb{R} \times \mathbb{S} \times S_K$ with $S_K$ being a compact orientable surface. 
The surfaces diffeomorphic to $S_K$ are called surfaces of symmetry. Moreover the Lorentz metric $g$ is a globally hyperbolic metric on $M$ for which each submanifold 
$\{t\}\times\mathbb{S} \times S_K$ is a Cauchy hypersurface. The orbits of the symmetry action are two-dimensional spheres, flat tori and hyperbolic spaces
for the case $K = 1$, $K=0$ and $K=-1$ respectively. For details and a definition of surface symmetric spacetimes we refer to \cite{rendallcrush}.

\subsection{Basic equations}
 
The metric takes the following form cf. for instance \cite{Tchap1}:
\begin{equation}\label{metric}
  g = -e^{2\mu(t,r)}dt^2 + e^{2\lambda(t,r)}dr^2 + t^2 g_K,  
\end{equation}
with 
\begin{equation*}
 g_K= d\theta^2 + \sin_{K}^{2}\theta d\varphi^{2}.
\end{equation*}

Here $t > 0$, the functions $\lambda$ and $\mu$ are periodic in
$r$ with period $1$ and
\begin{displaymath}
 \sin_{K}\theta := \left\{ \begin{array}{ll}
\sin\theta & \textrm{if $K=1,$}\\
1 & \textrm{if $K=0,$}\\
\sinh\theta & \textrm{if $K=-1.$}
  \end{array} \right.
\end{displaymath}
It has been shown \cite{ARR,Rein} that due to the symmetry the distribution function 
can be written in terms of the variables $t$, $r \in [0,1]$, $w \in \R$ and $F\in [0,\infty)$ where
\begin{eqnarray*}
&&w = e^{\lambda} p^1, \\
&&F = t^{4}(p^2)^2 + t^4 \sin_K^2\theta (p^{3})^2.
\end{eqnarray*}
In this expression $p^{a}$ are the spatial components of $p$ and the quantity $F$ is conserved along geodesics. The coordinates $(\theta,\varphi)$ range in $[0, \pi] \times [0, 2\pi]$, $[0, 2\pi] \times [0, 2\pi]$ and $[0, \infty)\times [0, 2\pi]$ for $K=1$, $K = 0$ and $K=-1$ respectively. For a proof of this in the hyperbolic case
(which easily can be extended to the other cases) see the proof of Lemma 3.1 in \cite{ARR}.
In these variables we have $p^0 = e^{-\mu}\sqrt{1 + w^{2} + F/t^{2}}$. Since the term $\sqrt{1 + w^{2} + F/t^{2}}$ will appear very often let us denote it by $V$.
Using the variables $w$ and $F$ in the expression (\ref{em}) we have:
 \begin{eqnarray*}
T_{\alpha\beta}=\frac{\pi}{t^{2}}  \int_0^{\infty} \int_{-\infty}^{\infty} V^{-1} f(t,r,w,F) p_{\alpha}p_{\beta} dwdF.
\end{eqnarray*}
The complete Einstein-Vlasov system with a cosmological constant with these variables reads as follows, where a dot and a prime denote derivation 
of the metric components with respect to $t$ and $r$ respectively:
\begin{equation} \label{1}
\partial_{t}f + \frac{e^{\mu-\lambda}w}{V}
\partial_{r}f - (\dot{\lambda} w +
e^{\mu-\lambda}\mu'V)\partial_{w}f = 0,
\end{equation}
\begin{equation} \label{2}
e^{-2\mu} (2t\dot{\lambda}+1)+ K - \Lambda t^{2} = 8 \pi t^{2}\rho,
\end{equation}
\begin{equation} \label{3}
e^{-2\mu} (2t\dot{\mu}-1)- K + \Lambda t^{2} = 8 \pi t^{2}p,
\end{equation}
 
\begin{equation} \label{4}
\mu' = -4 \pi t e^{\lambda+\mu}j,
\end{equation}
\begin{equation}\label{5}
e^{-2\lambda}\left(\mu'' + \mu'(\mu' - \lambda')\right) -
e^{-2\mu}\left(\ddot{\lambda}+(\dot{\lambda}-
\dot{\mu})(\dot{\lambda}+\frac{1}{t})\right) + \Lambda  = 4 \pi q,
\end{equation}
where
\begin{eqnarray*} 
&&\rho(t, r) := \frac{\pi}{t^{2}} \int_{-\infty}^{\infty}\int_{0}^{\infty} V f(t, r, w, F) dF dw = e^{-2\mu}T_{00}(t, r),\\
&&p(t, r) := \frac{\pi}{t^{2}} \int_{-\infty}^{\infty}\int_{0}^{\infty} \frac{w^{2}}{V} f(t, r, w, F) dF dw = e^{-2\lambda}T_{11}(t, r),\\
&&j(t, r) := \frac{\pi}{t^{2}} \int_{-\infty}^{\infty}\int_{0}^{\infty} w f(t, r, w, F) dF dw = -e^{\lambda +\mu}T_{01}(t, r),\\ 
&&q(t, r) := \frac{\pi}{t^{4}} \int_{-\infty}^{\infty}\int_{0}^{\infty} \frac{F}{V} f(t, r, w, F)dF dw = \frac{2}{t^{2}}T_{22}(t, r).
\end{eqnarray*}

\subsection{Asymptotic behaviour of the first derivatives of the metric terms}

In Lemma 6.1 of \cite{Tchap2} it was shown that for any characteristic
the quantity $u=t w$ converges uniformly to a constant along the characteristics, which implies together with
the fact that $f(t_0,r,w,F)$ is compactly supported in $w$ that there exist a constant $C$ such that
\begin{equation}\label{w}
  |w| \leq Ct^{-1} \ \textrm{and} \ f(t,r,w,F) =0, \ \textrm{if} \ |w| \geq  Ct^{-1}.
\end{equation}
The following estimates were determined for the metric for $K=0$ and $K=-1$, cf. (3.15)-(3.19) in \cite{Tchap1}, and for $K = 1$, cf. (5.7)-(5.11) in \cite{Tchap2}:
\begin{eqnarray*}
&&\dot\lambda=+t^{-1}(1+O(t^{-2})), \ \lambda=+ \ln t [1+O((\ln t)^{-1})],\\
&&\dot{\mu}=-t^{-1}(1+O(t^{-2})), \ \mu=-\ln t [1+O((\ln t)^{-1})],\\
&&\mu'= O(t^{-3+\varepsilon}),
\end{eqnarray*}
with $\varepsilon\in{(0,\frac23)}$. 
For $e^{\mu}$ one has actually a better estimate, namely the estimate (3.14) of \cite{Tchap1}: $e^{\mu}=\sqrt{\frac{3}{\Lambda}} t^{-1}(1+O(t^{-2}))$.
For the matter terms, cf. Proposition 6.2 in \cite{Tchap2} it was found that for the three types of surface symmetry we have that
\begin{eqnarray*}
 \rho=O(t^{-3}), \  p=O(t^{-5}),\  j=O(t^{-4}),\  q=O(t^{-5}).
\end{eqnarray*}
From the estimate for $j$ we thus obtain 
\begin{eqnarray*}
 \mu'= O(t^{-3}).
\end{eqnarray*}

\subsection{Characteristics}

\begin{lem}\label{c}
Consider a solution to the Einstein-Vlasov system with a positive cosmological constant $\Lambda$ and surface symmetry and fix $t_0\in(0,\infty)$ for $K\leq 0$ 
 and fix $t_0\in(\Lambda^{-\frac12},\infty)$ for $K=1$. Then there is a positive constant $C$, depending on the solution, such that
  \begin{eqnarray*} 
 &&\dot{\xi}=s^{-1}(1+c_1)\xi+(1+c_2)\eta,\\
 &&\dot{\eta}=s^{-2}(1+c_3)\xi+s^{-1}c_4\eta,
\end{eqnarray*}
where 
\begin{eqnarray}\label{C} 
\sum\limits_{i=1}^4 \vert c_i \vert \leq Cs^{-2},
\end{eqnarray}
and the estimates hold for all $(t,r,w)$ in the support of $f$ and $s\in[t_0,t]$.
\end{lem}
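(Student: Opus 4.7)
The plan is to read off the characteristic system of the Vlasov equation (\ref{1}),
$$\dot{r}=\frac{e^{\mu-\lambda}w}{V},\qquad \dot{w}=-\dot\lambda w-e^{\mu-\lambda}\mu' V,$$
and then to track the evolution of the particular combination $(\xi,\eta)$ of $r$, $w$ and their variations along a characteristic that was introduced in \cite{Rein}. Differentiating this combination in $s$ and substituting the two ODEs above gives an expression of the form $\dot\xi=A\xi+B\eta$ and $\dot\eta=C\xi+D\eta$, with coefficients $A,B,C,D$ that are explicit functions of $\lambda$, $\mu$, their derivatives, $w$, $V$, $F$ and $s$.

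The bulk of the work is then to show that after substituting the asymptotics collected in Section 3.2 these four coefficients take the stated principal-plus-remainder form. The relevant inputs are $s\dot\lambda=1+O(s^{-2})$, $s\dot\mu=-1+O(s^{-2})$, $e^{\mu-\lambda}=O(s^{-2})$ with an explicit leading coefficient coming from $e^{\mu}=\sqrt{3/\Lambda}\,s^{-1}(1+O(s^{-2}))$, and $\mu'=O(s^{-3})$. The bound (\ref{w}) on $w$, together with the fact that $F$ is bounded on the compact support of $f$, gives $V=\sqrt{1+w^2+F/s^2}=1+O(s^{-2})$ and hence $V^{-1}=1+O(s^{-2})$; this is the source of the $(1+c_2)$ and $(1+c_3)$ factors in the target system. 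Collecting leading terms should produce $A=s^{-1}(1+c_1)$, $B=1+c_2$, $C=s^{-2}(1+c_3)$ and $D=s^{-1}c_4$, with each remainder a sum of terms that are manifestly $O(s^{-2})$, thereby yielding (\ref{C}).

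The main obstacle I foresee is handling the spatial derivatives of the metric that inevitably appear in $C$ (and possibly in $A$) if $\xi,\eta$ involve variations of the characteristic with respect to its initial position: terms such as $\lambda'$, $\mu''$ or $\dot\lambda'$ are not controlled directly by Section 3.2. These must either be re-expressed via the Einstein equations (\ref{2})--(\ref{5}) and the momentum constraint (\ref{4}), using the decay rates $\rho=O(s^{-3})$, $j=O(s^{-4})$, $p=O(s^{-5})$ and $q=O(s^{-5})$, or--which is presumably the reason for importing the specific combination from \cite{Rein}--they must cancel algebraically when the two characteristic ODEs are combined in the right way, leaving only the controlled quantities above. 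Verifying this cancellation (or, equivalently, extracting the necessary control on spatial derivatives from the constraints) is where the detailed computation lives.
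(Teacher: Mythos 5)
Your outline follows the paper's proof exactly: the paper differentiates the characteristic system, forms Rein's combination $\xi=e^{\lambda-\mu}\partial R$, $\eta=\partial W+Ve^{\lambda-\mu}\dot{\lambda}\partial R$, and then inserts the Section 3.2 asymptotics together with $V=1+O(s^{-2})$ on the support of $f$ to read off $c_1,\dots,c_4$. The one point you flag as unresolved --- the appearance of $\lambda'$, $\mu''$, $\ddot{\lambda}$, $\dot{\lambda}'$ --- is settled in the paper by \emph{both} of the mechanisms you conjecture, each in its place: in $d\xi/ds$ the $(\lambda'-\mu')\frac{dR}{ds}$ contribution cancels algebraically against the corresponding term in $e^{\lambda-\mu}\partial\frac{dR}{ds}$, leaving only $(\dot{\lambda}-\dot{\mu})$ and $V$-factors; and in $d\eta/ds$ the uncontrolled second derivatives assemble precisely into the combination $\ddot{\lambda}+\dot{\lambda}(\dot{\lambda}-\dot{\mu})-e^{2\mu-2\lambda}(\mu''+\mu'(\mu'-\lambda'))$, which by the Einstein equation (\ref{5}) equals $(\Lambda-4\pi q)e^{2\mu}+(\dot{\mu}-\dot{\lambda})s^{-1}$, so that only $q=O(s^{-5})$, $e^{2\mu}=\frac{3}{\Lambda}s^{-2}(1+O(s^{-2}))$ and first time derivatives enter (and indeed $\Lambda e^{2\mu}+(\dot{\mu}-\dot{\lambda})s^{-1}=3s^{-2}-2s^{-2}+O(s^{-4})$ produces the leading $s^{-2}$ in front of $\xi$). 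So your strategy is the right one and nothing in it would fail; what is missing from the proposal as written is precisely this computation, which is the substance of the lemma --- as an outline it is correct and coincides with the paper, but it is not yet a proof until the cancellation and the substitution of (\ref{5}) are carried out.
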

\begin{proof}
 The characteristics $(R,W)(\cdot, t, r, w, F)$ of our system are the solution to the following system:
\begin{eqnarray*} 
  &&\frac{dR}{ds}=\frac{e^{\mu-\lambda}W}{V},\\
 &&\frac{dW}{ds}=-\dot{\lambda}W-\mu'e^{\mu-\lambda}V,
\end{eqnarray*}
with initial data
\begin{eqnarray*}
 R(t, t, r, w, F)=r, \ W(t, t, r, w, F)=w.
\end{eqnarray*}

Differentiating the system leads to derivatives we do not control, however a certain combination was found in \cite{Rein} (Lemma 3.2), where this is the case.
Consider the following quantities of \cite{Rein}: 
\begin{eqnarray}\label{xi} 
 &&\xi= e^{\lambda-\mu} \partial R,\\
 \label{eta}&&\eta=\partial W+V e^{\lambda-\mu}\dot{\lambda}\partial R,
\end{eqnarray}
for $\partial\in\{\partial_t,\partial_r,\partial_w\}$. One can compute the derivatives of these quantities obtaining:
\begin{eqnarray*} 
 &&\frac{d\xi}{ds}=[\dot{\lambda}-\dot{\mu}+(\lambda'-\mu')\frac{dR}{ds}]e^{\lambda-\mu}\partial R+e^{\lambda-\mu}\partial \frac{dR}{ds},\\
 &&\frac{d\eta}{ds}=\partial \frac{dW}{ds}+V e^{\lambda-\mu}\{[V^{-2}(W\frac{dW}{ds}-\frac{F}{s^3})+\dot{\lambda}-\dot{\mu}+(\lambda'-\mu')\frac{dR}{ds}]\dot{\lambda}\partial R+(\ddot{\lambda}+\dot{\lambda}'\frac{dR}{ds})\partial R+\dot{\lambda}\partial \frac{dR}{ds}\}.
\end{eqnarray*}
The derivatives $\partial \frac{d}{ds}(R,W)$ are the following:
\begin{eqnarray*} 
 &&\partial \frac{dR}{ds}=e^{\mu-\lambda}V^{-3}[ V^2(\mu'-\lambda')W\partial R + (1+F/s^2)\partial W],\\
 &&\partial \frac{dW}{ds}=-[\dot{\lambda}'W+(\mu''+\mu'(\mu'-\lambda'))e^{\mu-\lambda}V]\partial R-[\dot{\lambda}+W V^{-1}e^{\mu-\lambda}\mu']\partial W.
\end{eqnarray*}
Thus
\begin{eqnarray*} 
&& \frac{d\xi}{ds}=(\dot{\lambda}-\dot{\mu})e^{\lambda-\mu}\partial R+\frac{1+F/s^2}{V^3} \partial W,\\
&& \frac{d\eta}{ds}= \{ V [e^{\lambda-\mu}(\ddot{\lambda}+\dot{\lambda}(\dot{\lambda}-\dot{\mu}))-e^{\mu-\lambda}(\mu''+\mu'(\mu'-\lambda'))]-\dot{\lambda}\mu'W-V^{-1}e^{\lambda-\mu}\dot{\lambda}(\dot{\lambda}W^2+F/s^3)\}\partial R\\
&&-WV^{-1}(e^{\mu-\lambda}\mu'+W V^{-1}\dot{\lambda})\partial W,
\end{eqnarray*}
and in terms of $\xi$ and $\eta$
\begin{eqnarray*} 
 &&\frac{d\xi}{ds}=\left(\frac{W^2}{V^2}\dot{\lambda}-\dot{\mu}\right)\xi+\frac{1+F/s^2}{V^3}\eta,\\
 &&\frac{d\eta}{ds}=\{ V[(\ddot{\lambda}+\dot{\lambda}(\dot{\lambda}-\dot{\mu}))-e^{2\mu-2\lambda}(\mu''+\mu'(\mu'-\lambda'))]\\
 &&-V^{-1}\dot{\lambda}F/s^3\}\xi-WV^{-1}(e^{\mu-\lambda}\mu'+WV^{-1}\dot{\lambda})\eta.
\end{eqnarray*}
Now using equation (\ref{5}) in the second equation for the second derivatives we obtain:
\begin{eqnarray}\label{mam} 
 &&\frac{d\xi}{ds}=(V^{-2} W^2\dot{\lambda}-\dot{\mu})\xi+\frac{1+F/s^2}{V^3}\eta,\\
 \label{mam2}&&\frac{d\eta}{ds}=\{ V[(\Lambda-4\pi q)e^{2\mu}+(\dot{\mu}-\dot{\lambda})s^{-1}]-V^{-1}\dot{\lambda}F/s^3\}\xi-WV^{-1}(e^{\mu-\lambda}\mu'+WV^{-1}\dot{\lambda})\eta.
\end{eqnarray}
Using the estimates the lemma follows.
\end{proof}

\begin{lem}\label{ch}
 Consider a solution to the Einstein-Vlasov system with a positive cosmological constant $\Lambda$ and surface symmetry and fix $t_0\in(0,\infty)$ for $K\leq 0$ 
 and fix $t_0\in(\Lambda^{-\frac12},\infty)$ for $K=1$. Then there is a positive constant $C$,
 depending on the solution, such that
 \begin{eqnarray*} 
  \vert \frac{\partial R}{\partial r}(s,t,r,w,F) \vert + s\vert \frac{\partial W}{\partial r}(s,t,r,w,F)\vert \leq C,
   \end{eqnarray*}
 and the estimates hold for all $(t,r,w)$ in the support of $f$ and $s\in[t_0,t]$.
\end{lem}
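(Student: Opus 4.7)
The plan is to translate the system in Lemma \ref{c} into uniform bounds on $\partial R/\partial r$ and $s\partial W/\partial r$ through a rescaling tuned to the dominant fundamental solution of the leading-order equations, followed by a Gronwall estimate propagating backwards in $s$ from $s=t$.

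The fundamental solutions of the zeroth-order system $\dot\xi = s^{-1}\xi + \eta$, $\dot\eta = s^{-2}\xi$ are $(\xi,\eta)=(s^2,s)$ and $(1,-s^{-1})$, so I would introduce $\tilde\xi := \xi/s^2$ and $\tilde\eta := \eta/s$; the dominant mode then becomes the constant $(1,1)$. A direct computation from Lemma \ref{c} gives
\begin{eqnarray*}
\dot{\tilde\xi} &=& s^{-1}[(c_1-1)\tilde\xi + (1+c_2)\tilde\eta],\\
\dot{\tilde\eta} &=& s^{-1}[(1+c_3)\tilde\xi + (c_4-1)\tilde\eta].
\end{eqnarray*}
Passing to the eigenmodes $u := \tilde\xi + \tilde\eta$ and $v := \tilde\xi - \tilde\eta$, the leading dynamics decouple into $\dot u_0=0$ and $\dot v_0=-2s^{-1}v_0$, while the $c_i$-terms contribute perturbations of size $O(s^{-3})(|u|+|v|)$ on each right-hand side thanks to (\ref{C}).

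Next, I would read off the initial data at $s=t$. Since $\partial R/\partial r|_{s=t}=1$ and $\partial W/\partial r|_{s=t}=0$, the definitions (\ref{xi})--(\ref{eta}) give $\xi(t)=e^{\lambda-\mu}(t,r)$ and $\eta(t)=Ve^{\lambda-\mu}\dot\lambda(t,r)$. The asymptotic estimates of Subsection 3.2 combined with the support bound (\ref{w}) (so that $V=1+O(t^{-2})$) yield $Vt\dot\lambda=1+O(t^{-2})$, hence $u(t)=e^{\lambda-\mu}(1+Vt\dot\lambda)/t^2=O(1)$ uniformly in $t$; more importantly, the cancellation $1-Vt\dot\lambda=O(t^{-2})$ compensates the $O(t^2)$ growth of $e^{\lambda-\mu}$ to give $t^2v(t)=\xi(t)-t\eta(t)=O(1)$ uniformly in $t$.

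The core of the argument is then a coupled Gronwall estimate for $(u,s^2v)$ on $[t_0,t]$, propagating backward from $s=t$. Using $|c_i|\leq Cs^{-2}$ one obtains $|\dot u|\leq C(s^{-3}|u|+s^{-5}|s^2v|)$ and $|(s^2v)'|\leq C(s^{-1}|u|+s^{-3}|s^2v|)$; combined with the initial data from the previous step, this yields $|u(s)|+|s^2v(s)|\leq C$ on $[t_0,t]$, whence $|\xi|\leq Cs^2$ and $|\eta|\leq Cs$. Converting back via $\partial R/\partial r=e^{\mu-\lambda}\xi$ and $\partial W/\partial r=\eta-V\dot\lambda\xi$ (obtained from (\ref{eta})), together with $e^{\mu-\lambda}=O(s^{-2})$ and $V\dot\lambda=O(s^{-1})$, delivers the stated bound. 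The main obstacle is the non-integrable $O(s^{-1})$ coefficient in the estimate for $(s^2v)'$: closing the bootstrap uniformly in $t$ requires treating the inequalities for $u$ and $s^2v$ simultaneously, using the fully integrable decay rates ($s^{-3}, s^{-5}$) of the $u$-equation and the initial smallness $t^2v(t)=O(1)$ from the cancellation identified in the previous paragraph.
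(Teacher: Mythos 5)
Your change of variables is the paper's argument in different clothing: with $\hat{\eta}=s\eta$ the paper controls precisely $s^{-2}(\xi+\hat{\eta})=u$ and $\xi-\hat{\eta}=s^{2}v$ through the energy $E=s^{-4}(\xi+\hat{\eta})^2+(\xi-\hat{\eta})^2=u^2+(s^2v)^2$, and the initial cancellation $1-Vt\dot{\lambda}=O(t^{-2})$ you invoke at $s=t$ is exactly the one the paper uses to get $E(t;t,r,w)\leq C$. So the route is not genuinely different; the issue is whether your final Gronwall step closes.

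As written, it does not, and this is a genuine gap. From $\vert (s^2v)'\vert\leq C(s^{-1}\vert u\vert+s^{-3}\vert s^2v\vert)$, integrating backward from $s=t$ gives only $\vert s^2v(s)\vert\leq C\bigl(1+\ln(t/s)\sup\vert u\vert\bigr)$, and feeding this into the $u$-inequality produces $\int_s^t\sigma^{-5}\ln(t/\sigma)\,d\sigma\sim\tfrac14 s^{-4}\ln(t/s)$, which is not uniformly small on $[t_0,t]$ as $t\to\infty$; the model system $\dot{u}=-\kappa s^{-5}Q$, $\dot{Q}=-\kappa s^{-1}u$ with $u(t)=1$, $Q(t)=0$ has $Q(t_0)\geq\kappa\ln(t/t_0)\to\infty$, so integrable rates in the $u$-equation together with $O(1)$ data at $s=t$ are \emph{not} sufficient to close the bootstrap. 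Since the conclusion $\vert s\,\partial W/\partial r\vert\leq C$ is controlled precisely by $\vert\xi-s\eta\vert=\vert s^2v\vert$ (your closing sentence, which converts only $\vert\xi\vert\leq Cs^2$ and $\vert\eta\vert\leq Cs$ through $\partial W=\eta-V\dot{\lambda}\xi$, yields merely $\vert s\,\partial W/\partial r\vert\leq Cs^2$), a logarithmic loss here destroys the statement. What is actually needed is that the coefficient coupling $u$ into $(s^2v)'$, namely $\tfrac{s}{2}\bigl[(c_1+c_2)-(c_3+c_4)\bigr]$, be integrable, i.e.\ that $(c_1+c_2)-(c_3+c_4)=O(s^{-3})$ rather than the $O(s^{-2})$ guaranteed by (\ref{C}). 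This is a genuine cancellation: in this combination the $W^2$ and $F/s^2$ contributions drop out, and the remainder is a combination of the relative corrections to $\dot{\lambda}$, $\dot{\mu}$ and $e^{2\mu}$ which the field equations (\ref{2})--(\ref{3}) force to be $O(s^{-3})$; none of this is visible from Lemma \ref{c} alone. You should be aware that the paper's own proof hides the same difficulty --- its step $dE/ds\geq -C_1s^{-1}E$ requires bounding $2s^{-1}\vert(c_1-c_3)\xi-(c_4-c_2)\hat{\eta}\vert\,\vert\xi-\hat{\eta}\vert$, and with only $\vert\xi\vert\leq Cs^2E^{1/2}$ this gives $-Cs^{-1}E$, not $-Cs^{-3}E$ --- so exhibiting this cancellation is the real content of the lemma, and neither your proposal nor a literal reading of the printed proof supplies it.
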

\begin{proof}
               Let $\hat{\eta}=s \eta$. Then due to Lemma \ref{c} we have:
 \begin{eqnarray*} 
 &&\dot{\xi}=s^{-1}[(1+c_1)\xi+(1+c_2)\hat{\eta}],\\
 &&\dot{\hat{\eta}}=s^{-1}[(1+c_3)\xi+(1+c_4)\hat{\eta}].
\end{eqnarray*}
Consider the quantity
\begin{eqnarray*} 
 E=s^{-4}(\xi+\hat{\eta})^2+(\xi-\hat{\eta})^2.
\end{eqnarray*}
Then
\begin{eqnarray*} 
 \frac{dE}{ds}=2s^{-5}[(c_1+c_3)\xi+(c_2+c_4)\hat{\eta}](\xi+\hat{\eta})+2s^{-1}[(c_1-c_3)\xi-(c_4-c_2)\hat{\eta}](\xi-\hat{\eta}).
\end{eqnarray*}
We obtain:
\begin{eqnarray}\label{ie1} 
 \frac{dE}{ds}\geq- C_1 s^{-1} E,
\end{eqnarray}
where  
\begin{eqnarray*} 
 C_1= 2\sup(\vert c_1+c_3 \vert, \vert c_2+c_4 \vert,  \vert c_1-c_3\vert, \vert c_4-c_2 \vert), 
\end{eqnarray*}
and 
\begin{eqnarray*} 
 \vert C_1 \vert \leq C s^{-2},  
\end{eqnarray*}
due to (\ref{C}) of Lemma \ref{c}. As a consequence of (\ref{ie1}) the following inequality holds:

\begin{eqnarray}
 \frac{dE}{ds}\geq- C s^{-3} E.
\end{eqnarray}
Integrating this inequality leads to:
\begin{eqnarray*} 
 E(s;t,r,w)\leq C E(t;t,r,w),
\end{eqnarray*}
for $s\in{[t_0,t]}$. We are interested in the derivatives with respect to $r$. Thus we will consider (\ref{xi})-(\ref{eta}) only with $\partial=\partial_r$ in the following. 
The quantity $E(t;t,r,w)$ can be estimated having in mind that
\begin{eqnarray*}
 &&\frac{\partial R}{\partial r}(t;t,r,w,F)=1,\\
 &&\frac{\partial W}{\partial r}(t;t,r,w,F)=0,
\end{eqnarray*}
and using the estimates for the metric terms and their derivatives. What we obtain is
\begin{eqnarray*}
 E(t;t,r,w)\leq C,
\end{eqnarray*}
where we have used the fact that
\begin{eqnarray*}
 \xi(t;t,r,w)-\hat{\eta}(t;t,r,w)=e^{\lambda-\mu}(1-V\dot{\lambda}t)=O(1).
\end{eqnarray*}
We have thus that:
\begin{eqnarray}\label{a} 
 &&\left\vert \frac{\partial R}{\partial r}(1+V\dot{\lambda}s)e^{\lambda-\mu}+s\frac{\partial W}{\partial r}\right\vert \leq Cs^2, \\
 \label{b}&&\left\vert \frac{\partial R}{\partial r}(1-V\dot{\lambda}s)e^{\lambda-\mu}-s\frac{\partial W}{\partial r}\right\vert \leq C,
\end{eqnarray}
which has the consequence that
\begin{eqnarray*} 
 \left\vert \frac{\partial R}{\partial r}\right\vert \leq Cs^2e^{\mu-\lambda} \leq C,
\end{eqnarray*}
where the last inequality was obtained using the estimates. Using this inequality in (\ref{b}) implies that:
\begin{eqnarray*} 
 \left\vert s\frac{\partial W}{\partial r}\right\vert\leq C,
\end{eqnarray*}
 and the lemma is proved.
              \end{proof}

\section{Higher order estimates}
\subsection{First step}

\begin{lem}\label{k}
 Consider a solution to the Einstein-Vlasov system with a positive cosmological constant $\Lambda$ and surface symmetry and fix $t_0\in(0,\infty)$ for $K\leq 0$ 
 and fix $t_0\in(\Lambda^{-\frac12},\infty)$ for $K=1$. Then there is a positive constant $C$, depending on the solution and $t_0$, such that
 \begin{eqnarray*} 
 \Vert \rho'\Vert_{C^0}+t\Vert j'\Vert_{C^0}+t^2\Vert q'\Vert_{C^0}+t^2\Vert p'\Vert_{C^0} \leq C t^{-3}, 
\end{eqnarray*}
for all $t \geq t_0$.
\end{lem}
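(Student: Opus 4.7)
The plan is to use the Vlasov equation, which tells us that $f$ is constant along characteristics, in order to transfer $r$-derivatives of $f$ back to $r$-derivatives of the initial datum $\bar f$ composed with the characteristic flow. Concretely, for every $(t,r,w,F)$ in the support we have
\begin{equation*}
f(t,r,w,F) = \bar f\bigl(R(t_0;t,r,w,F),\, W(t_0;t,r,w,F),\, F\bigr),
\end{equation*}
so by the chain rule
\begin{equation*}
\partial_r f = (\partial_R \bar f)\,\partial_r R(t_0;\cdot) + (\partial_W \bar f)\,\partial_r W(t_0;\cdot).
\end{equation*}
Since $\bar f$ is smooth and compactly supported, $\partial_R \bar f$ and $\partial_W \bar f$ are bounded by a constant depending only on the initial datum. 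Applied at $s=t_0$ (a fixed positive number), Lemma \ref{ch} gives $|\partial_r R(t_0;\cdot)| + t_0 |\partial_r W(t_0;\cdot)| \leq C$, so both partial derivatives are individually bounded, and therefore $\|\partial_r f\|_{C^0} \leq C$ uniformly in $t\geq t_0$.

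With this in hand I would differentiate the defining integrals for $\rho,p,j,q$ under the integral sign (legitimate because $f(t,r,\cdot,\cdot)$ is compactly supported with support uniformly bounded in $r$), and use two ingredients. First, by the support estimate (\ref{w}), the $w$-support has measure $O(t^{-1})$, while $F$ ranges over a fixed compact set inherited from the initial datum. Second, the elementary bounds $V \geq 1$, $V \leq C$ (using $|w|\leq Ct^{-1}$ and $F/t^2 = O(t^{-2})$), $w^2/V \leq Ct^{-2}$, and $F/V \leq C$ hold on the support. Combining these,
\begin{equation*}
|\rho'| \leq \frac{\pi}{t^2}\int\!\!\int V\,|\partial_r f|\,dF\,dw \leq C t^{-2}\cdot t^{-1} = Ct^{-3},
\end{equation*}
and analogously $|j'| \leq Ct^{-4}$, $|p'| \leq Ct^{-5}$, $|q'| \leq Ct^{-5}$, which together yield the claimed inequality.

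I do not expect a serious obstacle: all the hard analytic work has been done in Lemmas \ref{c} and \ref{ch} and in the support estimate (\ref{w}). The only thing one must be a little careful about is the interchange of differentiation and integration, but since $f$ is smooth and its $w$-support stays inside a fixed compact set $[-Ct^{-1}, Ct^{-1}]$ independent of $r$ (and the $F$-support is fixed once and for all by the initial data), this is routine. The structure of the argument also makes clear where each factor of $t$ comes from: a factor $t^{-2}$ from the prefactor in the definitions of $\rho,p,j$ (respectively $t^{-4}$ for $q$), a factor $t^{-1}$ from the length of the $w$-interval, and additional factors from the weights $w^2/V$ or $F/V$, which together deliver the uniform bound $Ct^{-3}$ on the weighted sum.
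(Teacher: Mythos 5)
Your proposal is correct and follows essentially the same route as the paper, which simply states that the lemma is a direct consequence of Lemma \ref{ch} and the support bound $|w|\leq Ct^{-1}$; you have filled in exactly the intended details (representing $f$ via the characteristic flow from $t_0$, bounding $\partial_r f$ by the chain rule and Lemma \ref{ch}, and counting powers of $t$ from the prefactors, the $w$-support, and the weights $w^2/V$, $F/V$). The power counting for each of $\rho',j',p',q'$ checks out against the stated weighted sum.
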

\begin{proof}
 It is a direct consequence of Lemma \ref{ch} and the fact that $w=O(t^{-1})$ in the support of $f$.
\end{proof}

Now we are able to obtain higher estimates.
\begin{lem}\label{l}
 Consider a solution to the Einstein-Vlasov system with a positive cosmological constant $\Lambda$ and surface symmetry and fix $t_0\in(0,\infty)$ for $K\leq 0$ 
 and fix $t_0\in(\Lambda^{-\frac12},\infty)$ for $K=1$. Then there is a positive constant $C$, depending on the solution and $t_0$, such that
 \begin{eqnarray*} 
t\Vert \dot{\lambda}'\Vert_{C^0}+t \Vert \dot{\mu}'\Vert_{C^0}+ \Vert \mu''\Vert_{C^0} &\leq& Ct^{-3},\\
\Vert \lambda' \Vert_{C^0} &\leq& C,
\end{eqnarray*}
for all $t \geq t_0$.
\end{lem}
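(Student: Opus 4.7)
The plan is to differentiate the Einstein equations (\ref{2})--(\ref{4}) with respect to $r$ and read each of the desired bounds off algebraically, using the lower-order decay rates collected in Section 3.2 together with the matter-gradient estimates just obtained in Lemma \ref{k}. The structural subtlety is that none of those three equations controls $\lambda'$ directly; one has to first estimate $\dot\lambda'$ and only then recover $\lambda'$ by integration in $t$. This forces the ordering $\dot\lambda' \to \dot\mu' \to \lambda' \to \mu''$.

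First I would differentiate (\ref{2}) in $r$ and solve for $\dot\lambda'$, obtaining
\[
\dot\lambda' = 4\pi t e^{2\mu}\rho' + \frac{\mu'}{t}(2t\dot\lambda + 1).
\]
Using $e^{2\mu} = O(t^{-2})$, $\mu' = O(t^{-3})$, $2t\dot\lambda + 1 = O(1)$ and $\rho' = O(t^{-3})$ from Lemma \ref{k}, both terms are $O(t^{-4})$, giving $t\|\dot\lambda'\|_{C^0} = O(t^{-3})$. An identical manipulation of (\ref{3}) produces
\[
\dot\mu' = 4\pi t e^{2\mu} p' + \frac{\mu'}{t}(2t\dot\mu - 1),
\]
and with $p' = O(t^{-5})$ and $2t\dot\mu - 1 = O(1)$ one gets $\dot\mu' = O(t^{-4})$. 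Next, since $\partial_t \lambda' = \dot\lambda'$ decays like $s^{-4}$, which is integrable on $[t_0,\infty)$, writing
\[
\lambda'(t,r) = \lambda'(t_0,r) + \int_{t_0}^t \dot\lambda'(s,r)\,ds
\]
and using boundedness of the smooth $r$-periodic initial datum $\lambda'(t_0,\cdot)$ yields $\|\lambda'\|_{C^0} \leq C$.

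Finally I would differentiate (\ref{4}) in $r$ to obtain
\[
\mu'' = -4\pi t(\lambda'+\mu')e^{\lambda+\mu} j - 4\pi t\, e^{\lambda+\mu} j'.
\]
The asymptotics of $\lambda$ and $\mu$ combine to $e^{\lambda+\mu} = O(1)$, and substituting $\lambda' = O(1)$ from the preceding step, $\mu' = O(t^{-3})$, $j = O(t^{-4})$ and $j' = O(t^{-5})$ from Lemma \ref{k} makes the right-hand side $O(t^{-3})$. The main point to watch is precisely this ordering: attempting to estimate $\mu''$ directly stumbles on the missing bound for $\lambda'$, and $\lambda'$ is not accessible from the $r$-derivatives of the constraints but only via time integration of $\dot\lambda'$. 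Once that is arranged, the argument is just bookkeeping of the rates from Section 3.2 and Lemma \ref{k}.
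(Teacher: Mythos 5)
Your proposal is correct and follows essentially the same route as the paper: differentiate (\ref{2})--(\ref{4}) in $r$, insert the rates from Section 3.2 and Lemma \ref{k}, and recover $\lambda'$ by integrating the integrable bound $\dot\lambda'=O(t^{-4})$ in time before estimating $\mu''$. One trivial slip: Lemma \ref{k} gives $j'=O(t^{-4})$, not $O(t^{-5})$, but this still yields the required $O(t^{-3})$ for the corresponding term in $\mu''$.
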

\begin{proof}
 From (\ref{2})-(\ref{3}) we have 
\begin{eqnarray}\label{misch}
&&          \dot{\lambda}'=[4 \pi t^{2}\rho'+\mu'(8 \pi t^{2}\rho-K+\Lambda t^{2})]e^{2\mu}t^{-1},\\
&&\label{mu}\dot{\mu}' =[4 \pi t^{2}p'+\mu'( 8 \pi t^{2}p+K-\Lambda t^{2})]e^{2\mu}t^{-1},
\end{eqnarray}
which means that
\begin{eqnarray*}
\dot{\lambda}'=O(t^{-4}),
\end{eqnarray*}
and integrating in time:
\begin{eqnarray*}
 \lambda'=O(1).
\end{eqnarray*}
From (\ref{4}) we obtain:
\begin{eqnarray*} 
&& \mu''=\mu'(\mu'+\lambda')-4 \pi t e^{\lambda+\mu}j',
\end{eqnarray*}
and the rest of the lemma follows.
\end{proof}

\subsection{Second step: higher order derivatives of the characteristic system}

\begin{lem}\label{fff}
 Consider a solution to the Einstein-Vlasov system with a positive cosmological constant $\Lambda$ and surface symmetry and fix $t_0\in(0,\infty)$ for $K\leq 0$ 
 and fix $t_0\in(\Lambda^{-\frac12},\infty)$ for $K=1$,  then for all $n\in{\{0\}\cup \mathbb{N}}$ there is a positive constant $C_n$, depending on the solution and $t_0$, such that 
 \begin{eqnarray*}
  \Vert \lambda'\Vert_{C^{n}}+t^4\Vert \dot{\lambda}' \Vert_{C^{n}}+t^3\Vert \mu''\Vert_{C^{n}}+t^4\Vert \dot{\mu}'\Vert_{C^{n}} &\leq& C_n,\\
\label{mot} \Vert \rho\Vert_{C^{n}}+t\Vert j\Vert_{C^{n}}+t^2\Vert q\Vert_{C^{n}}+t^2\Vert p\Vert_{C^{n}} &\leq& C_n t^{-3},
\end{eqnarray*}
for $t \geq t_0$.
\end{lem}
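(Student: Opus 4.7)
\emph{Proof plan.} The argument proceeds by induction on $n$. The base case $n = 0$ is already in hand: Lemma \ref{l} provides $\|\lambda'\|_{C^0} \leq C$, $\|\dot\lambda'\|_{C^0} = O(t^{-4})$, $\|\mu''\|_{C^0} = O(t^{-3})$ and $\|\dot\mu'\|_{C^0} = O(t^{-4})$, while the matter decay $\rho = O(t^{-3})$, $j = O(t^{-4})$, $p = O(t^{-5})$, $q = O(t^{-5})$ recalled from \cite{Tchap2} handles the matter part.

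Suppose all stated bounds hold at orders $0, 1, \ldots, n-1$. The first step is to upgrade Lemma \ref{ch} to order $n$. Setting
\begin{equation*}
\xi^{(n)} = e^{\lambda-\mu}\frac{\partial^n R}{\partial r^n}, \qquad \eta^{(n)} = \frac{\partial^n W}{\partial r^n} + V e^{\lambda-\mu}\dot\lambda\,\frac{\partial^n R}{\partial r^n},
\end{equation*}
and differentiating (\ref{mam})-(\ref{mam2}) a further $n-1$ times in $r$, one obtains by repeated application of Leibniz' rule and the chain rule a linear ODE in $s$ for $(\xi^{(n)}, \eta^{(n)})$ with exactly the same leading $2 \times 2$ coefficient matrix as in Lemma \ref{c}, plus an inhomogeneous source built out of $\partial_r^k R$ and $\partial_r^k W$ for $k < n$ and of $r$-derivatives of $\lambda, \mu, \dot\lambda, \dot\mu, \mu', \mu''$ of order at most $n$. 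Every factor in the source is controlled by the inductive hypothesis, and the support bound $|W| \leq C s^{-1}$ of (\ref{w}) gives $V = 1 + O(s^{-2})$ and $F/s^2 = O(s^{-2})$; the Gr\"onwall argument for
\begin{equation*}
E_n = s^{-4}(\xi^{(n)} + s\eta^{(n)})^2 + (\xi^{(n)} - s\eta^{(n)})^2
\end{equation*}
from Lemma \ref{ch} then extends to yield
\begin{equation*}
\left|\frac{\partial^n R}{\partial r^n}(s; t, r, w, F)\right| + s\left|\frac{\partial^n W}{\partial r^n}(s; t, r, w, F)\right| \leq C_n
\end{equation*}
on the support of $f$ for $s \in [t_0, t]$.

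With these higher characteristic estimates, the matter bounds at order $n$ follow as in the proof of Lemma \ref{k}: since $f$ is constant along the characteristic flow, $\partial_r^n f(t, r, w, F)$ is a polynomial in $\partial_r^k R(t_0; \cdot)$ and $\partial_r^k W(t_0; \cdot)$ ($k \leq n$) composed with derivatives of the smooth initial datum, so the compact $w$-support of radius $O(t^{-1})$ together with the compact $F$-support produce $\|\rho\|_{C^n} \leq C_n t^{-3}$ and analogously for $j, p, q$ with the claimed weights. Substituting these into the $r$-differentiated equations (\ref{misch})-(\ref{mu}) and into $\mu'' = \mu'(\mu' + \lambda') - 4\pi t e^{\lambda+\mu} j'$ from the proof of Lemma \ref{l} yields the $C^n$ bounds for $\dot\lambda', \dot\mu', \mu''$, and integrating $\dot\lambda'$ in $s$ from $t_0$ to $t$ gives $\|\lambda'\|_{C^n} \leq C_n$.

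The main obstacle is the bookkeeping in the inductive step: one has to verify that each new term produced by $n$-fold differentiation of the characteristic system still inherits enough powers of $s^{-1}$ for the structural condition (\ref{C}) to persist at order $n$, so that the Gr\"onwall argument of Lemma \ref{ch} remains closed. The self-sustaining mechanism is that every extra $r$-derivative of a metric coefficient either comes with the same decay rate by the inductive hypothesis, or is multiplied by $W = O(s^{-1})$ coming from the Vlasov structure, and these two features together propagate the scaling from order $n-1$ to order $n$.
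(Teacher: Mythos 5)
Your proposal is correct and follows essentially the same route as the paper: a simultaneous induction on the order of $r$-derivatives of the characteristics, the metric components and the matter terms, using the energy $E_n=s^{-4}(\xi^{(n)}+s\eta^{(n)})^2+(\xi^{(n)}-s\eta^{(n)})^2$ for the $n$-fold differentiated system (\ref{mam})--(\ref{mam2}) with an $O(s^{-1})$, $O(s^{-2})$ inhomogeneity, then feeding the resulting bounds on $\partial_r^n f$ into (\ref{misch}), (\ref{mu}) and (\ref{4}). The only substantive difference is that you build $\xi^{(n)},\eta^{(n)}$ from the top-order derivatives $\partial_r^n R,\partial_r^n W$ alone, which lets you read off the bound on $s\,\partial_r^n W$ directly from $\xi^{(n)}-s\eta^{(n)}$, whereas the paper works with $\partial_r^n\xi,\partial_r^n\eta$ and needs the extra cancellation $\xi_N-\hat\eta_N=O(1)+sW_{N+1}+(1-V\dot\lambda s)\sum_j R_{N+1-j}\partial_r^j(e^{\lambda-\mu})$ to extract $W_{N+1}$; your inductive hypothesis should explicitly carry the characteristic bounds $\vert\partial_r^k R\vert+s\vert\partial_r^k W\vert\leq C_k$ for $k<n$ alongside the stated conclusions, as the paper does with its assumption (\ref{i1}).
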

\begin{proof}
 We prove this by induction.\\
\textbf{Inductive assumption.} For some $ N \in{\mathbb{N}}$ there are constants $C_j$, $j=1,...,N$ depeding on the solution and $t_0$ such that
\begin{eqnarray}\label{i1} 
 \vert \frac{\partial^j R}{\partial r^j}(s,t,r,w,F) \vert + s\vert \frac{\partial^j W}{\partial r^j}(s,t,r,w,F)\vert &\leq& C_j,\\
 \label{met} \Vert \lambda'\Vert_{C^{N-2}}+t^4\Vert \dot{\lambda} \Vert_{C^{N-1}}+t^3\Vert \mu'\Vert_{C^{N-1}}+t^4\Vert \dot{\mu}\Vert_{C^{N-1}} &\leq& C_{N}.  
\end{eqnarray}
The assumption holds for $N=2$ due to Lemmas \ref{ch} and \ref{l}.
From (\ref{i1}) we have $\vert \frac{{\partial}^j f}{\partial r^j} \vert \leq C_j$ for $j=1,\dots,N$. Thus (\ref{mot}) holds for $n=N$. 
Let us take $N-1$ derivatives of (\ref{misch}). The $N-1$-th derivative of $\dot{\lambda}$ is bounded due to the induction assumption (\ref{met}). The same bound will hold
for the $N$-th derivative of $\dot{\lambda}$ since the $N$-th derivative of $\mu$ is bounded by (\ref{met}) and the $N$-th derivative of $\rho$ by (\ref{i1}). Integrating in time the 
bound for the $N$-th derivative of $\dot{\lambda}$ we obtain that the $N$-th derivative of $\lambda$ is bounded. Taking $N-1$ derivatives of (\ref{mu}) and due to the bounds
on the $N$th derivative of $\mu$ and $p$ we obtain the bound for the $N$-th derivative of $\dot{\mu}$. Now we can proceed to take $N$ 
derivatives of (\ref{4}) with the result that the $N+1$th derivative of $\mu$ is bounded. It remains to show that (\ref{i1}) is valid for $j=N+1$.
Denote for the rest of this section the number of partial derivatives with respect to $r$ with a subindex in the case of the quantities $\xi$, $\eta$, $R$ and $W$, e.g.:
\begin{eqnarray*}
 \xi_N\equiv\frac{\partial^N \xi}{\partial r^N} \ \ etc.
\end{eqnarray*}
Taking $N$-derivatives of (\ref{mam})-(\ref{mam2}) we obtain that the right hand side consists of the same terms with $\xi$, $\eta$ replaced by $\xi_N$, $\eta_N$ plus terms, let us
call them $r_1$ and $r_2$ respectively, which arise from taking at least one, up to $N$ derivatives of the coefficients in front of $\xi$, $\eta$ multiplied by $\xi_j$, $\eta_j$ respectively with $j$ ranging from $0$
to $N-1$. Taking $N-1$-derivatives of the definitions of $\xi$ and $\eta$ (\ref{xi})-(\ref{eta}) and having in mind that due to (\ref{i1}) we have bounds on
$R_j$ and $W_j$ for $j=1,...,N$, we obtain:
\begin{eqnarray*}
&&\xi_{j-1}=O(s^2),\\
&&\eta_{j-1}=O(s).
\end{eqnarray*}
Thus we can estimate the terms $r_1$ and $r_2$ obtaining:
\begin{eqnarray*}
 &&r_1=O(s^{-1}),\\
 &&r_2=O(s^{-2}).
\end{eqnarray*}
Having in mind that we have shown that (\ref{met}) holds for $N+1$:
\begin{eqnarray*} 
 &&\dot{\xi}_N=s^{-1}(1+c_1)\xi_N+(1+c_2)\eta_N+O(s^{-1}),\\
 &&\dot{\eta}_N=s^{-2}(1+c_3)\xi_N+s^{-1}c_4\eta_N+O(s^{-2}),
\end{eqnarray*}
Introducing the notation ${\hat{\eta}}_N=s\eta_N$ we obtain:
\begin{eqnarray*} 
 &&\dot{\xi}_N=s^{-1}[(1+c_1)\xi_N+(1+c_2)\hat{\eta}_N]+O(s^{-1}),\\
 &&\dot{\hat{\eta}}_N=s^{-1}[(1+c_3)\xi+(1+c_4)\hat{\eta}_N]+O(s^{-1}).
\end{eqnarray*}
Define
\begin{eqnarray*} 
 E_N=s^{-4}(\xi_N+\hat{\eta}_N)^2+(\xi_N-\hat{\eta}_N)^2.
\end{eqnarray*}
Then
\begin{eqnarray*}
 \frac{dE_N}{ds}\geq -C_N s^{-3}E_N-C s^{-1}E_N^{\frac12},
\end{eqnarray*}
where $C_N$ is a positive constant. Define $\bar{E}_N=e^{-f_N(s)}E_N$ with $f_{N}(t_0)=0$ and $f_{N}'=-C_N s^{-3}$.
Then,
\begin{eqnarray*}
  \frac{d\bar{E}_N}{ds}\geq -C s^{-1}\bar{E}_N^{\frac12}.
\end{eqnarray*}
Integrating we have
\begin{eqnarray*}
 \bar{E}_N^{\frac12}(s;t,r,w)\leq\bar{E}_N^{\frac12}(t;t,r,w)+ C,
\end{eqnarray*}
for $s\in{[t_0,t]}$. We thus obtain that
\begin{eqnarray*}
 E_N(s;t,r,w) \leq C.
\end{eqnarray*}
Taking $N$ derivatives of the definitions of $\xi$ and $\eta$ (\ref{xi})-(\ref{eta}) we come to the conclusion that
\begin{eqnarray}\label{ex}
 &&\xi_N= e^{\lambda-\mu} R_{N+1}+O(s^2),\\
 \label{ex2}&&\hat{\eta}_N=sW_{N+1}+sV e^{\lambda-\mu}\dot{\lambda}R_{N+1}+O(s^2),
\end{eqnarray}
and 
\begin{eqnarray*}
 R_{N+1}(s;t,r,w)\leq C.
\end{eqnarray*}
The expressions (\ref{ex})-(\ref{ex2}) are not enough to obtain the desired estimate for $W_{N+1}$, but note that
\begin{eqnarray*}
  \xi_N-\hat{\eta}_N = O(1)+s W_{N+1}+ (1-V\dot{\lambda}s)\sum_{j=1}^{N}R_{N+1-j}\frac{\partial^j}{\partial r^j}(e^{\lambda-\mu}), 
\end{eqnarray*}
where the last term on the right hand side is also of order $O(1)$ which leads us to the desired estimate for $W_{N+1}$ and (\ref{i1}) holds for $j=N+1$.
\end{proof}

\section{Geometric consequences and cosmic no-hair theorem}
 
 \begin{thm}\label{geo}
  Consider a surface symmetric solution to the Einstein-Vlasov system with a positive cosmological constant $\Lambda$. Choose coordinates such that the corresponding metric
  takes the form (\ref{metric}) on $I\times \mathbb{S} \times S_K$ where $I=(t_0,\infty)$ with $t_0\geq 0$ for 
 $K\leq 0$ and $t_0\geq \Lambda^{-\frac12}$ for $K=1$. Let $t_1=t_0+2$. Denote by $\bar{g}(t,\cdot)$ and $\bar{k}(t,\cdot)$ the metric and second fundamental form induced
 by $g$ on the hypersurface $\{t\}\times \mathbb{S} \times S_K$, and 
$\bar{g}_{ij}(t,\cdot)$ and $\bar{k}_{ij}(t,\cdot)$ denote the components of $\bar{g}(t,\cdot)$ and $\bar{k}(t,\cdot)$ with respect to the vectorfields 
$\partial_1=\partial_r$, $\partial_2=\partial_{\theta}$ and $\partial_3=\partial_{\varphi}$ Then there is smooth function $\lambda_{\infty}$ on $\mathbb{S}$ 
and for all $n\in{\{0\}\cup \mathbb{N}}$ there is a constant $C_n$ such that
 \begin{eqnarray}\label{lambda}
  \left\Vert e^{2\lambda(t,\cdot)-2\ln t}-e^{2\lambda_{\infty}} \right\Vert_{C^n} \leq C_n t^{-3},\\
\label{guess} t \left\Vert t^{-2}\bar{g}_{ij}(t,\cdot)-\bar{g}_{\infty,ij} \right\Vert_{C^n} +\left\Vert t^{-2} \bar{k}_{ij}(t,\cdot)-H\bar{g}_{\infty,ij} \right\Vert_{C^n} \leq C_n t^{-2},
 \end{eqnarray}
 where  
 \begin{eqnarray*}
  \bar{g}_{\infty}=e^{2\lambda_{\infty}}dr^2 +g_K,
 \end{eqnarray*}
 and $H=\sqrt{\frac\Lambda3}$.
 \end{thm}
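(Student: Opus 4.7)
The strategy is to solve the Einstein constraints (\ref{2})--(\ref{3}) for $\dot\lambda$ and $e^\mu$, insert the asymptotic expansions of Section 3.2 together with the $C^n$-estimates of Lemma \ref{fff}, extract a smooth limit $\lambda_\infty$ by time integration, and then assemble $\bar g_{ij}$ and $\bar k_{ij}$ from the metric ansatz (\ref{metric}).

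First, solving (\ref{2}) for $\dot\lambda$ yields
\begin{align*}
\dot\lambda - \frac{1}{t} = \frac{1}{2t}\bigl[e^{2\mu}(8\pi t^2\rho - K + \Lambda t^2) - 3\bigr].
\end{align*}
Combining $e^\mu = \sqrt{3/\Lambda}\,t^{-1}(1+O(t^{-2}))$, $\rho = O(t^{-3})$, and the $C^n$-bounds of Lemma \ref{fff} on $\mu'$, $\mu''$, $\rho$ (propagated to $e^\mu$ through products and the exponential) gives $\Vert \dot\lambda - 1/t\Vert_{C^n}\leq C_n t^{-3}$. An analogous treatment of (\ref{3}) produces $\Vert e^\mu - \sqrt{3/\Lambda}\,t^{-1}\Vert_{C^n}\leq C_n t^{-3}$; in particular $\Vert t^{-1}e^{-\mu} - H\Vert_{C^n}$ and $\Vert t\dot\lambda - 1\Vert_{C^n}$ are $O(t^{-2})$. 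I would then define
\begin{align*}
\lambda_\infty(r) := (\lambda(t_0,r)-\ln t_0) + \int_{t_0}^\infty\bigl(\dot\lambda(s,r) - s^{-1}\bigr)\,ds,
\end{align*}
which is smooth on $\mathbb{S}$ since the integrand is $O(s^{-3})$ uniformly in every $C^n$-norm. To obtain the claimed rate in (\ref{lambda}) I would work with $Y := e^{2\lambda - 2\ln t}$ directly: $\dot Y = 2(\dot\lambda - 1/t)Y$ is $O(t^{-3})$ in $C^n$, and integrating from $t$ to $\infty$ together with the cancellation $\dot\lambda + \dot\mu = 4\pi t e^{2\mu}(\rho+p) = O(t^{-4})$ obtained by summing (\ref{2}) and (\ref{3}) delivers the $t^{-3}$ rate.

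For (\ref{guess}), in (\ref{metric}) only $\bar g_{11} = e^{2\lambda}$ varies in $r$, while $\bar g_{22} = t^2$ and $\bar g_{33} = t^2\sin_K^2\theta$; hence $t^{-2}\bar g_{ij} - \bar g_{\infty,ij}$ vanishes except at $(i,j)=(1,1)$, where it equals $e^{2\lambda - 2\ln t} - e^{2\lambda_\infty}$, and the metric piece of (\ref{guess}) is $t$ times (\ref{lambda}). With lapse $e^\mu$ and zero shift, $\bar k_{ij} = (2e^\mu)^{-1}\partial_t \bar g_{ij}$ gives $\bar k_{11}=e^{2\lambda-\mu}\dot\lambda$, $\bar k_{22}=te^{-\mu}$, $\bar k_{33}=te^{-\mu}\sin_K^2\theta$. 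Factoring $t^{-2}\bar k_{11} = e^{2\lambda - 2\ln t}\cdot(t^{-1}e^{-\mu})\cdot(t\dot\lambda)$ and $t^{-2}\bar k_{jj} = (t^{-1}e^{-\mu})(t^{-2}\bar g_{\infty,jj})$ for $j\in\{2,3\}$, and combining (\ref{lambda}) with the $O(t^{-2})$ estimates on $t^{-1}e^{-\mu} - H$ and $t\dot\lambda - 1$ yields the required $O(t^{-2})$ bound in $C^n$.

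The main obstacle is the sharp $t^{-3}$ rate in (\ref{lambda}): naive integration of $|\dot\lambda - 1/t|$ only produces $t^{-2}$ on $\lambda - \ln t - \lambda_\infty$, so one has to work with the exponentiated variable $Y$ and exploit the cancellation structure of the Einstein system (in particular the identity $\dot\lambda + \dot\mu = O(t^{-4})$). All remaining steps are routine applications of Lemma \ref{fff} together with the first-order asymptotic expansions of Section 3.2.
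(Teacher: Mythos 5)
Your overall route coincides with the paper's: solve the constraints (\ref{2})--(\ref{3}) for $\dot\lambda$ and $e^{2\mu}$, upgrade the first-order asymptotics to $C^n$ using Lemma \ref{fff}, define $\lambda_\infty$ by integrating $\dot\lambda-t^{-1}$ in time, and read off $\bar k_{ij}=\tfrac12 e^{-\mu}\partial_t\bar g_{ij}$ componentwise. The componentwise bookkeeping, the smoothness of $\lambda_\infty$, and the $\bar k$-part of (\ref{guess}) (which only needs the $O(t^{-2})$ versions of your estimates) are all fine.

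The genuine gap sits exactly where you flag "the main obstacle", and the fix you propose does not close it. From (\ref{3}) one gets $\frac{d}{dt}\bigl(te^{-2\mu}\bigr)=\Lambda t^2-K-8\pi t^2p$, hence $e^{2\mu}=\frac{3}{\Lambda t^2}\bigl(1+\frac{3K}{\Lambda t^2}+O(t^{-3})\bigr)$; inserting this into your own formula for $\dot\lambda-t^{-1}$ gives
\begin{equation*}
\dot\lambda-\frac1t=\frac{3K}{\Lambda}\,t^{-3}+O(t^{-4}).
\end{equation*}
For $K=\pm1$ the $t^{-3}$ coefficient is a nonzero constant, so integrating $\dot Y=2(\dot\lambda-t^{-1})Y$ from $t$ to $\infty$ leaves an honest term $\frac{3K}{\Lambda}e^{2\lambda_\infty}t^{-2}$ in $e^{2\lambda-2\ln t}-e^{2\lambda_\infty}$ that no choice of $\lambda_\infty$ removes. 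The identity $\dot\lambda+\dot\mu=4\pi te^{2\mu}(\rho+p)=O(t^{-4})$ does not help: it only says $\dot\lambda-t^{-1}=-(\dot\mu+t^{-1})+O(t^{-4})$, and $\dot\mu+t^{-1}$ carries the same $-\frac{3K}{\Lambda}t^{-3}$ term, so nothing is gained. (Equivalently, $e^{2\lambda+2\mu}$ does converge at rate $t^{-3}$, but converting back to $e^{2\lambda-2\ln t}$ multiplies by $t^{-2}e^{-2\mu}=\frac{\Lambda}{3}-Kt^{-2}+O(t^{-3})$, which reinstates the $t^{-2}$ error.) So your argument delivers (\ref{lambda}) and the $\bar g$-part of (\ref{guess}) at rate $t^{-3}$ only for $K=0$, where indeed $\dot\lambda-t^{-1}=O(t^{-4})$ directly and no trick is needed; for $K=\pm1$ it stalls at $t^{-2}$. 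For what it is worth, the paper's own proof simply declares these two inequalities "a direct consequence of Lemma \ref{fff}"; that lemma does give $\dot\lambda'=O(t^{-4})$ in $C^n$, so every $r$-derivative contribution to the $C^n$ norms converges at rate $t^{-3}$ — the obstruction is confined to the zeroth-order part of the $rr$-component. You should either restrict the sharp rate to $K=0$, weaken the claim to $t^{-2}$ for $K=\pm1$, or exhibit an input beyond the quoted estimates that kills the $\frac{3K}{\Lambda}t^{-3}$ term; as written, the step would fail.
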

\begin{proof}

 The inequality (\ref{lambda}) and the inequality concerning $\bar{g}_{ij}$ is a direct consequence of Lemma \ref{fff}. 
  The induced metric is 
 \begin{eqnarray*}
  \bar{g}=e^{2\lambda}dr^2+t^2g_K,
 \end{eqnarray*}
 and
 \begin{eqnarray*}
  \bar{k}_{ij}=\frac12 e^{-\mu}\partial_t \bar{g}_{ij}.
 \end{eqnarray*}
 Since
  \begin{eqnarray*}
   \partial_t \bar{g}=2\dot{\lambda}e^{2\lambda}dr^2+2tg_K,
  \end{eqnarray*}
and using the estimate of
 $e^{\mu}$
 \begin{eqnarray*}
 e^{\mu}=H^{-1}t^{-1}(1+O(t^{-2})),
\end{eqnarray*}
from Lemma \ref{fff} we also obtain that
 \begin{eqnarray*}
  \Vert \bar{k}_{ij} - H \bar{g}_{ij} \Vert_{C^n} \leq C_n,
 \end{eqnarray*}
 which completes the proof.
\end{proof}
In \cite{Tchap2,Tchap1} a form of cosmic no-hair theorem was already proven. We will follow the arguments of \cite{HakanHans} to obtain a stronger result.

Recall that $\Sigma_t=\{t\}\times \mathbb{S} \times S_K$ is a Cauchy hypersurface for each $(t_0,\infty)$. Let $\gamma=(\gamma^0,\bar{\gamma})$ be a future directed and inextendible causal curve with $\gamma^0(s)=s$ and $\bar{\gamma}(s)=[r(s),\theta(s),\varphi(s)]$
 defined on the interval $I_{\gamma}=(t_0,\infty)$. Due to the causality and having in mind Theorem \ref{geo} we have that there is a constant independent of the curve $K_0>1$
 such that:
 \begin{eqnarray*}
  \bar{g}_{\infty,ij}[\bar{\gamma}(t)]\dot{\bar{\gamma}}^i(t)\dot{\bar{\gamma}}^j(t) \leq K_0^2 H^{-2}t^{-4}, \ \ \ \forall t\geq t_1=t_0+2.
 \end{eqnarray*}
Thus there is a $\bar{x}_0\in{\mathbb{S}\times S_K}$ such that $d_{\infty}[\bar{\gamma}(t),\bar{x}_0]\leq K_0(Ht)^{-1}$ for all $t\geq t_1$, where $d_{\infty}$ is the
topological metric on $\mathbb{S}\times S_K$ induced by $\bar{g}_{\infty}$. Let $\epsilon_{\inf}>0$ denote the injectivity radius of $(\mathbb{S}\times S_K,\bar{g}_{\infty})$. For a given
$\bar{x}\in{\mathbb{S}\times S_K}$ there are thus geodesic normal coordinates on $B_{\epsilon_{\inf}}$ where distances are computed with $d_{\infty}$. Fix $t_- >1+K_0 (H\epsilon_{\inf})^{-1}$.
Then
\begin{eqnarray*}
 J^{-}(\gamma)\cap J^{+}(\Sigma_{t_{-}})\subseteq D_{t_{-},K_{0},\bar{x}_0}=\{(t,\bar{x})\in{I\times \mathbb{S} \times S_K}:t \geq t_-, \ d_{\infty}(\bar{x},\bar{x}_0)\leq K_0 (Ht)^{-1}\}.
\end{eqnarray*}
Moreover the closed ball of radius $K_0 (Ht)^{-1}$ with respect to $d_{\infty}$ and centre $\bar{x}_0$ is contained in the domain of definition of geodesic normal coordinates
$\mathrm{x}$ with center at $\bar{x}_0$.
Defining $\psi(\tau,\bar{\zeta})=[e^{H\tau},\mathrm{x}^{-1}(\bar{\zeta})]$ we have
\begin{eqnarray*}
 \psi^{-1}(D_{t_{-},K_{0},\bar{x}_0})=\{(\tau,\bar{\zeta})\in{(H^{-1}\ln t_0,\infty)\times \mathbb{R}^3}:\tau \geq H^{-1} \ln t_-, \ \ \vert\bar{\zeta}\vert\leq K_0 H^{-1}e^{-H\tau}\}.
\end{eqnarray*}
 Letting $T$ be slightly smaller than $H^{-1} \ln t_-$ and $K$ be slightly larger than $K_0$, the map $\psi$ is still defined on the set
 \begin{eqnarray*}
  C_{\Lambda,K,T}=\{(\tau,\bar{\zeta}):\tau > T, \ \ \vert\bar{\zeta}\vert<R\},
\end{eqnarray*}
with $R=KH^{-1}e^{-H\tau}$ and thus let us define $D=\psi(C_{\Lambda,K,T})$. Let $\bar{\mathrm{g}}_{\infty,ij}$, $\bar{\mathrm{g}}_{ij}(\tau,\cdot)$ 
and $\bar{\mathrm{k}}_{ij}(\tau,\cdot)$ denote the components of $\bar{g}_{\infty}$, $\bar{g}(e^{H\tau},\cdot)$ and $\bar{k}(e^{H\tau},\cdot)$, respectively,
with respect to the geodesic normal coordinates $\bar{\mathrm{x}}$. Moreover consider them to be functions on the image $\bar{\mathrm{x}}$, i.e. on $B_{\epsilon_{\inf}}(0)$
with the origin corresponding to $\bar{x}_0$. Denote $S_{\tau}=\{\tau\} \times B_{R(\tau)}(0)$ and the metric of the de Sitter spacetime by
 \begin{eqnarray*}
  g_{\mathrm{dS}}=-d{\tau}^2+e^{2H\tau}g_E,
 \end{eqnarray*}
where $g_E$ is the standard flat Euclidean metric. Then using the previous theorem 
\begin{eqnarray*}    
   e^{H\tau} \left\Vert e^{-2H\tau}\bar{\mathrm{g}}_{ij}(\tau,\cdot)-\bar{\mathrm{g}}_{\infty,ij} \right\Vert_{C^n} +\left\Vert e^{-2H\tau} \bar{\mathrm{k}}_{ij}(\tau,\cdot)-H\bar{\mathrm{g}}_{\infty,ij} \right\Vert_{C^n} \leq C_n  e^{-2H\tau},
 \end{eqnarray*}
  for all $\tau\geq T$. By definition of the geodesic normal coordinates $\bar{\mathrm{x}}$:
 \begin{eqnarray*}
  \bar{\mathrm{g}}_{\infty,ij}(0)=\delta_{ij}, \ \ \ (\partial_l \bar{\mathrm{g}}_{\infty,ij})(0)=0.
 \end{eqnarray*}
From this follows for $\zeta \in{S_{\tau}}$ and $\tau \geq T$ that
\begin{eqnarray*}
 \vert ( \partial_l \bar{\mathrm{g}}_{\infty,ij})(\bar{\zeta}) \vert = \left\vert \int^1_0 \frac{d}{ds} [(\partial_l \bar{\mathrm{g}}_{\infty,ij})(s\bar{\zeta})]ds\right\vert \leq C e^{-H\tau},
\end{eqnarray*}
and
\begin{eqnarray*}
 \vert (\bar{\mathrm{g}}_{\infty,ij})(\bar{\zeta})-\delta_{ij} \vert \leq e^{-3H\tau}.
\end{eqnarray*}
Thus
\begin{eqnarray*}    
   e^{H\tau} \left\Vert e^{-2H\tau}\bar{\mathrm{g}}_{ij}(\tau,\cdot)-\delta_{ij} \right\Vert_{C^0} +\left\Vert e^{-2H\tau} \bar{\mathrm{k}}_{ij}(\tau,\cdot)-H\delta_{ij} \right\Vert_{C^0} \leq C  e^{-2H\tau},
 \end{eqnarray*}
  for all $\tau\geq T$. We can summarize this in the following theorem:
 \begin{thm}
  Consider a surface symmetric solution to the Einstein-Vlasov system with a positive cosmological constant $\Lambda$. Choose coordinates such that the corresponding metric
  takes the form (\ref{metric}) on $I\times \mathbb{S} \times S_K$ where $I=(t_0,\infty)$ . Then
  \begin{itemize}
   \item there is an open set $D$ in $(M,g)$, such that $J^{-}(\gamma)\cap J^{+}(\Sigma_{t_{-}})\subseteq D$, and $D$ is diffeomorphic to $C_{\Lambda,K,T}$ for a suitable
   choice of $K\geq 1$ and $T>0$
   \item using $\psi: C_{\Lambda,K,T} \mapsto D$ to denote the diffeomorphism and using $\bar{g}_{{\mathrm{dS}}}(\tau,\cdot)$ and $\bar{k}_{{\mathrm{dS}}}(\tau,\cdot)$ to denote the metric and second fundamental
   form induced on $S_{\tau}$ by $g_{{\mathrm{dS}}}$; using $\bar{g}(\tau,\cdot)$ and $\bar{k}(\tau,\cdot)$ to denote the metric and second fundamental form induced on $S_{\tau}$ by $\psi^* g$; and letting
   $N\in{N}$, the following holds:
   \begin{eqnarray}\label{0}
    e^{H\tau}\Vert \bar{g}_{{\mathrm{dS}}}(\tau,\cdot) - \bar{g}(t,\cdot)\Vert_{C^N_{{\mathrm{dS}}}(S_{\tau})} + \Vert \bar{k}_{{\mathrm{dS}}}(\tau,\cdot) - \bar{k}(\tau,\cdot)\Vert_{C^N_{{\mathrm{dS}}}(S_{\tau})}\leq e^{-2H\tau},
   \end{eqnarray}
for all $\tau>T$ where we use the notation
\begin{eqnarray*}
 \Vert h \Vert_{C^N_{{\mathrm{dS}}}(S_{\tau})}=\left( \sup_{S_{\tau}} \sum^N_{l=0} \bar{g}_{{\mathrm{dS}},i_1j_1}\dots\bar{g}_{{\mathrm{dS}},i_lj_l}\bar{g}_{{\mathrm{dS}}}^{im}\bar{g}_{{\mathrm{dS}}}^{jn}\bar{\nabla}^{i_1}_{{\mathrm{dS}}}\dots \bar{\nabla}_{{\mathrm{dS}}}^{i_l}h_{ij}\bar{\nabla}^{j_1}_{{\mathrm{dS}}}\dots\bar{\nabla}_{{\mathrm{dS}}}^{j_l}h_{mn}\right)^{\frac12},
\end{eqnarray*}
for a covariant 2-tensor field $h$ on $S_{\tau}$, where $\bar{\nabla}_{{\mathrm{dS}}}$ denotes the Levi-Civita connection associated with $\bar{g}_{{\mathrm{dS}}}(\tau,\cdot)$.
  \end{itemize}

 \end{thm}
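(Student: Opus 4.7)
The plan is to prove the two bullets in sequence. The first is essentially a bookkeeping consolidation of the construction carried out in the paragraphs preceding the theorem, while the second follows by upgrading the $C^0$ coordinate estimate derived there to the scale-invariant $C^N_{{\mathrm{dS}}}$-norm for arbitrary $N$.

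For the first bullet, I would fix the future directed inextendible causal curve $\gamma$, invoke the causality estimate together with Theorem~\ref{geo} to produce a point $\bar{x}_0\in\mathbb{S}\times S_K$ with $d_\infty[\bar\gamma(t),\bar{x}_0]\le K_0(Ht)^{-1}$ for all $t\ge t_1$, and then pick $t_->1+K_0(H\epsilon_{\mathrm{inf}})^{-1}$. The containment $J^-(\gamma)\cap J^+(\Sigma_{t_-})\subseteq D_{t_-,K_0,\bar{x}_0}$ together with the inclusion of the latter set in the domain of the geodesic normal chart $\mathrm{x}$ at $\bar{x}_0$ shows that $\psi(\tau,\bar\zeta)=(e^{H\tau},\mathrm{x}^{-1}(\bar\zeta))$ is a diffeomorphism from $C_{\Lambda,K,T}$, with $K$ slightly larger than $K_0$ and $T$ slightly smaller than $H^{-1}\ln t_-$, onto the open neighbourhood $D=\psi(C_{\Lambda,K,T})$ of $J^-(\gamma)\cap J^+(\Sigma_{t_-})$. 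This yields the first bullet.

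For the second bullet I would take as starting point the coordinate estimate obtained from Theorem~\ref{geo} in the normal coordinates $\bar{\mathrm x}$,
\begin{equation*}
e^{H\tau}\bigl\|e^{-2H\tau}\bar{\mathrm g}_{ij}(\tau,\cdot)-\bar{\mathrm g}_{\infty,ij}\bigr\|_{C^n}+\bigl\|e^{-2H\tau}\bar{\mathrm k}_{ij}(\tau,\cdot)-H\bar{\mathrm g}_{\infty,ij}\bigr\|_{C^n}\le C_n e^{-2H\tau},
\end{equation*}
and convert the reference tensor from $\bar{\mathrm g}_{\infty,ij}$ to $\delta_{ij}$. Because the coordinates $\bar{\mathrm x}$ are geodesic normal at the origin, one has $\bar{\mathrm g}_{\infty,ij}(0)=\delta_{ij}$ and $\partial_l\bar{\mathrm g}_{\infty,ij}(0)=0$; Taylor expansion on the Euclidean ball of radius $KH^{-1}e^{-H\tau}$ therefore produces estimates on $\partial^\alpha(\bar{\mathrm g}_{\infty,ij}-\delta_{ij})$ that shrink in $\tau$ for small $|\alpha|$ and stay bounded otherwise. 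Combining with the displayed bound by the triangle inequality yields Euclidean $C^N$-estimates for $e^{-2H\tau}\bar{\mathrm g}_{ij}-\delta_{ij}$ and for $e^{-2H\tau}\bar{\mathrm k}_{ij}-H\delta_{ij}$ on $S_\tau$.

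Finally I would translate these Euclidean bounds into the $C^N_{{\mathrm{dS}}}$-norm. Since $\bar g_{{\mathrm{dS}}}=e^{2H\tau}\delta$ has vanishing Christoffel symbols in $\bar{\mathrm x}$, the covariant derivative $\bar\nabla_{{\mathrm{dS}}}$ reduces to Euclidean partial differentiation on tensor components, each factor $\bar g_{{\mathrm{dS}},ij}$ in the norm contributes $e^{2H\tau}$ and each $\bar g_{{\mathrm{dS}}}^{ij}$ contributes $e^{-2H\tau}$. Term by term the $l$-th summand of $\|h\|_{C^N_{{\mathrm{dS}}}(S_\tau)}$ then reduces to $e^{-2H\tau}$ times the Euclidean $C^l$-norm of $h$, and applying this to $h=\bar g-\bar g_{{\mathrm{dS}}}=e^{2H\tau}(e^{-2H\tau}\bar g-\delta)$ and to $h=\bar k-\bar k_{{\mathrm{dS}}}$ produces the two summands in (\ref{0}). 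The main obstacle is this last scale-bookkeeping step: one has to verify that, for every $0\le l\le N$ and every index pairing in the definition of $\|\cdot\|_{C^N_{{\mathrm{dS}}}}$, the combined powers of $e^{H\tau}$ from raising and lowering indices with $\bar g_{{\mathrm{dS}}}$ combine with the Euclidean bounds to produce exactly the decay required by (\ref{0}).
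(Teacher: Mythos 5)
Your proposal follows essentially the same route as the paper: the first bullet is exactly the paper's construction of $\bar{x}_0$, $t_-$, $D_{t_-,K_0,\bar{x}_0}$ and the map $\psi$, and the second bullet is the paper's conversion of the estimate of Theorem~\ref{geo} into geodesic normal coordinates, using $\bar{\mathrm{g}}_{\infty,ij}(0)=\delta_{ij}$, $(\partial_l\bar{\mathrm{g}}_{\infty,ij})(0)=0$ and Taylor expansion on the shrinking ball $S_\tau$. In fact you go slightly further than the paper, which only records the resulting $C^0$ coordinate estimate before stating the theorem; your final paragraph supplying the scaling bookkeeping for the $C^N_{\mathrm{dS}}$-norm (constant conformal factor, vanishing Christoffel symbols of $\bar{g}_{\mathrm{dS}}$ in the chart, powers of $e^{\pm 2H\tau}$ from raising and lowering) is exactly the step the paper leaves implicit.
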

The conclusions about the space-time of the theorem coincide with Definition 8 of \cite{HakanHans} which defines when a space-time is said to be future asymptotically de Sitter like, except that there it is sufficient that the left hand side of (\ref{0})
without the prefactor in the first term tends to zero in the limit $\tau \rightarrow \infty$.

\section{Energy estimates for the distribution function}
\subsection{Energy of order zero}
Let us use the following notation:
\begin{eqnarray}\label{alp}
&&\alpha=\frac{e^{\mu-\lambda}w}{V},\\
\label{be}&&\beta=\dot{\lambda} w +e^{\mu-\lambda}\mu'V.
\end{eqnarray}
Consider now a generalized Vlasov equation:
\begin{eqnarray}\label{General}
 \partial_t h  +\alpha \partial_{r}h - \beta \partial_{w}h = R.
\end{eqnarray}
Define
\begin{eqnarray*}
 \mathcal{E}_0= \int^1_0 \int^{\infty}_0 \int^{\infty}_{-\infty} h^2 drdwdF.
\end{eqnarray*}
Then using the generalized Vlasov equation and integrating by parts:
\begin{eqnarray*}
 \dot{\mathcal{E}}_0&=& \int^1_0 \int^{\infty}_0 \int^{\infty}_{-\infty}[-\alpha \partial_r (h^2)+\beta \partial_w (h^2)+2hR] drdwdF\\
 &=& \int^1_0 \int^{\infty}_0 \int^{\infty}_{-\infty}[(\alpha'-\beta_w) h^2+2hR] drdwdF.
\end{eqnarray*}
Using the estimates of the last section i.e.:
\begin{eqnarray}\label{first}
 \alpha'&=&(\mu'-\lambda')\alpha=O(t^{-3}),\\
 \label{first2}\beta_w&=&\dot{\lambda}+e^{\mu-\lambda}\mu'wV^{-\frac12}=t^{-1}(1+O(t^{-2})),
\end{eqnarray}
we obtain:
\begin{eqnarray}\label{GV}
 \dot{\mathcal{E}}_0\leq t^{-1}(-1+Ct^{-2})\mathcal{E}_0+\int^1_0 \int^{\infty}_0 \int^{\infty}_{-\infty} 2hR drdwdF.
\end{eqnarray}

\subsection{Energies of higher order}
Now let us consider higher derivatives and use the following notation for derivatives of $f$:
\begin{eqnarray}\label{notation}
 \frac{\partial^{n+m}f}{\partial r^n \partial w^m}=f_{n,m},
\end{eqnarray}
notation we will also use for the derivatives of $\alpha$ and $\beta$. Define
\begin{eqnarray}\label{sum}
 \mathcal{E}_l= \sum\limits_{n+m\leq l}t^{-2m} \int^1_0 \int^{\infty}_0 \int^{\infty}_{-\infty} f_{n,m}^2 drdwdF.
\end{eqnarray}

\begin{lem}\label{VVV}
 Consider a solution to the Einstein-Vlasov system with a positive cosmological constant $\Lambda$ and surface symmetry and fix $t_0\in(0,\infty)$ for $K\leq 0$ 
 and fix $t_0\in(\Lambda^{-\frac12},\infty)$ for $K=1$,  then there is a positive constant $C_l$, depending on the solution, $l$ and $t_0$, 
 such that for all $l\in\{0\}\cup{\mathbb{N}}$
 \begin{eqnarray*}
  \frac{d\mathcal{E}_l}{dt}\leq t^{-1}(-1+C_lt^{-1})\mathcal{E}_l,
\end{eqnarray*}
for $t \geq t_0$. In particular $t\mathcal{E}_l$ is bounded to the future.
\end{lem}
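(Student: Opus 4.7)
My plan is to derive an evolution inequality for each weighted summand $t^{-2m}\int f_{n,m}^2$ in the definition (\ref{sum}) and then sum over $n+m\leq l$. Differentiating (\ref{1}) by $\partial_r^n\partial_w^m$ via Leibniz yields
\[
\partial_t f_{n,m}+\alpha f_{n+1,m}-\beta f_{n,m+1}=R_{n,m},
\]
where, extending the notation (\ref{notation}) to $\alpha$ and $\beta$,
\[
R_{n,m}=\sum_{(k,l)\neq(0,0)}\binom{n}{k}\binom{m}{l}\bigl[-\alpha_{k,l}f_{n+1-k,m-l}+\beta_{k,l}f_{n-k,m+1-l}\bigr].
\]
The decisive observation is that, among the coefficients $\alpha_{k,l},\beta_{k,l}$ appearing in $R_{n,m}$, only $\beta_{0,1}=\beta_w=t^{-1}(1+O(t^{-2}))$, recorded in (\ref{first2}), fails to decay faster than $t^{-1}$. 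Every other such derivative is $O(t^{-3})$ or better: each one either picks up $\mu^{(n)}$ or $\dot\lambda^{(n)}$ for some $n\geq 1$, controlled by Lemma \ref{fff}, or brings down a factor of $w=O(t^{-1})$ from differentiating $V$ or the linear factor $w$ in $\beta$.

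The integration-by-parts computation that produced (\ref{GV}) now gives
\[
\frac{d}{dt}\int f_{n,m}^2=\int(\alpha_r-\beta_w)f_{n,m}^2+2\int f_{n,m}R_{n,m}.
\]
The $(k,l)=(0,1)$ piece of $R_{n,m}$ equals $m[-\alpha_wf_{n+1,m-1}+\beta_wf_{n,m}]$; its diagonal part adds $2m\beta_w\int f_{n,m}^2$, so combining with $-\beta_w\int f_{n,m}^2$ from integration by parts the coefficient of $\int f_{n,m}^2$ becomes $(2m-1)t^{-1}+O(t^{-3})$. Multiplying by the weight $t^{-2m}$ and including the contribution $-2mt^{-1}\cdot t^{-2m}\int f_{n,m}^2$ from $(d/dt)t^{-2m}$, the diagonal terms collapse to $[-t^{-1}+O(t^{-3})]\cdot t^{-2m}\int f_{n,m}^2$. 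This is precisely why the weight $t^{-2m}$ is chosen as in (\ref{sum}): it is designed to cancel the $m\beta_w$ source produced by the $w$-derivative of $\beta\partial_wf$.

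All remaining contributions inside $R_{n,m}$, including the off-diagonal $-m\alpha_wf_{n+1,m-1}$ from the $(0,1)$ term, are handled via Cauchy--Schwarz. Writing $\|f_{n,m}\|_{L^2}=t^{m}\bigl(t^{-2m}\int f_{n,m}^2\bigr)^{1/2}$, each cross term of the form $2t^{-2m}\int f_{n,m}\alpha_{k,l}f_{n+1-k,m-l}$ is bounded by $C|\alpha_{k,l}|t^{-l}\mathcal{E}_l$ after AM--GM, and similarly each $\beta$-term by $C|\beta_{k,l}|t^{1-l}\mathcal{E}_l$. By the decay estimates above every such prefactor is $O(t^{-3})$; summing over $n+m\leq l$ yields
\[
\frac{d\mathcal{E}_l}{dt}\leq t^{-1}(-1+C_lt^{-1})\mathcal{E}_l.
\]
The boundedness of $t\mathcal{E}_l$ then follows from $(d/dt)(t\mathcal{E}_l)=\mathcal{E}_l+t\,d\mathcal{E}_l/dt\leq C_lt^{-2}(t\mathcal{E}_l)$ and Gr\"onwall. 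The main technical hurdle is the combinatorial bookkeeping needed to isolate $\beta_w$ as the unique source of $t^{-1}$-sized coefficients; once that is verified using Lemma \ref{fff} and the support bound $w=O(t^{-1})$, the weights $t^{-2m}$ are precisely the ones that make the energy estimate close at every order.
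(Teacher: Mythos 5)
Your proposal is correct and follows essentially the same route as the paper: commute $\partial_r^n\partial_w^m$ through the Vlasov equation, apply the zeroth-order energy identity (\ref{GV}) to each $f_{n,m}$, isolate $\beta_w=t^{-1}(1+O(t^{-2}))$ as the only coefficient that does not decay fast enough, and observe that the $t^{-2m}$ weights are tuned so that the resulting $(2m-1)t^{-1}$ diagonal contribution combines with $-2mt^{-1}$ from differentiating the weight to give $-t^{-1}$. Your explicit Leibniz expansion and weighted Cauchy--Schwarz bookkeeping is a cleaner, more systematic version of what the paper does by writing out the $n+m=1$ case and appealing to induction.
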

\begin{proof}
 Given a solution $f$ of the Vlasov equation, the function $f_{n,m}$ satisfies the following equation
\begin{eqnarray}\label{GW}
 (\partial_t   +\alpha \partial_{r} - \beta \partial_{w})f_{n,m} =[\alpha \partial_r, \partial^n_r \partial^m _w]f - [\beta \partial_w, \partial^n_r \partial^m _w]f.
\end{eqnarray}
Thus the function $f_{n,m}$ satisfies the generalized Vlasov equation (\ref{General}) beeing $R$ the right hand side of (\ref{GW}). Let us consider one term of the sum (\ref{sum}):
\begin{eqnarray*}
 \mathcal{E}=t^{-2m}e_{n,m}=t^{-2m}\int^1_0 \int^{\infty}_0 \int^{\infty}_{-\infty} f_{n,m}^2 drdwdF,
\end{eqnarray*}
where the integral has been denoted by $e_{n,m}$. We can compute the derivative using inequality (\ref{GV}) for $e_{n,m}$ with $h$ replaced by $f_{n,m}$ and
where $R$ is now the right hand side of (\ref{GW}), obtaining:
\begin{eqnarray}\label{bust}
 \dot{\mathcal{E}}\leq (-2m-1+Ct^{-2})t^{-1} \mathcal{E} +2t^{-2m} \int^1_0 \int^{\infty}_0 \int^{\infty}_{-\infty} f_{n,m}\{[\alpha \partial_r, \partial^n_r \partial^m _w]f - [\beta \partial_w, \partial^n_r \partial^m _w]f\}drdwdF.
\end{eqnarray}
Using the notation introduced in the beginning of this section, we have due to
(\ref{first})-(\ref{first2}) and looking at the definitions of $\alpha$ and $\beta$ (\ref{alp})-(\ref{be}):
\begin{eqnarray*}
 &&\alpha_{1,0}=O(t^{-3}),\\
 &&\beta_{0,1}=t^{-1}(1+O(t^{-2})),\\
 &&\beta_{1,0}=O(t^{-4}),\\
 &&\alpha_{0,1}=e^{\mu-\lambda}V^{-2}(V-V^{-\frac12}w^2)=O(t^{-2}),\\
 &&\beta_{0,2}=e^{\mu-\lambda}\mu'V^{-\frac12}(1-w^2 V^{-1})=O(t^{-5}).
\end{eqnarray*}
Having a look at the last two expressions and having in mind Lemma \ref{fff}, we conclude that higher derivatives of these expression will decay at least at the same rate.

Consider now the integral of the right hand side of (\ref{bust}) for the case of $n+m=1$, i.e. the sum of $f_{1,0}$ and $f_{0,1}$. The integrand looks as follows:
\begin{eqnarray}\label{tir}
 2[-\alpha_{1,0}f_{1,0}^2+(\beta_{1,0}-\alpha_{0,1})t^{-1}f_{1,0}f_{0,1}+2t^{-2}\beta_{0,1}f_{0,1}^2].
\end{eqnarray}
The derivatives with respect to $r$ of $f$ are bounded and the terms of (\ref{sum}) are all positive definite. Thus one can bound $f_{1,0}$ which appears in the second term
of (\ref{tir})  with $f$ and integrating by parts the second term can be bounded by $\mathcal{E}_0$. The last term will contribute in first order with $2t^{-1}$. By an inductive argument one can treat the
higher terms. At each order a term $2t^{-1}$ will arise which cancels with the term $-2mt^{-1}$ on the right hand side of (\ref{bust}) and the lemma follows.
\end{proof}
From this lemma we prove an important consequence (compare with Lemma 63 of \cite{HakanHans}). For this purpose, we introduce the following notation:
\begin{eqnarray}\label{notation2}
 \hat{f}(t,r,w,F)=f(t,r,t^{-1}w,F).
\end{eqnarray}
\begin{thm}\label{support}
 Consider a solution to the Einstein-Vlasov system with a positive cosmological constant $\Lambda$ and surface symmetry and existence interval $(t_0,\infty)$ with $t_0\geq 0$ for 
 $K\leq 0$ and $t_0\geq \Lambda^{-\frac12}$ for $K=1$. Fix $l\in\{0\}\cup{\mathbb{N}}$. Assume that in order for 
 $(t,r,w,F)\in [t_1,\infty) \times \mathbb{S} \times \mathbb{R} \times (0, \infty)$ to be in the support of $\hat{f}$, 
 $w$ and $F$ have to satisfy $w^2+F\leq C$. Then then there is a positive constant $C_l$, depending on the solution and $l$, such that
 \begin{eqnarray*}
  \left\Vert\partial_t \hat{f}(t,\cdot)\right\Vert_{C^l[\mathbb{S}\times\mathbb{R} \times (0, \infty)]} \leq C_l t^{-3},
 \end{eqnarray*}
 for all $t\geq t_1$. Moreover a smooth, non-negative function $\hat{f}_{\infty}$ with compact support on $\mathbb{R}^{2} \times (0, \infty)$ such that
 \begin{eqnarray*}
  \left\Vert \hat{f}(t,\cdot)-\hat{f}_{\infty}\right\Vert_{C^l[\mathbb{S}\times\mathbb{R} \times (0, \infty)]} \leq C_l t^{-2},
 \end{eqnarray*}
for all $t\geq t_1$.
\end{thm}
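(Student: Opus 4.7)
The plan is to derive a PDE for $\hat f$ via the change of variables $w \mapsto w/t$, show that all coefficients in this PDE are $O(t^{-3})$ on the support of $\hat f$, and then combine $C^l$ control of $\hat f$ (from the $L^2$ energy estimates of Lemma \ref{VVV} and Sobolev embedding on a compact set) with the decay of the coefficients to integrate $\partial_t \hat f$ in time.

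First I would substitute $\hat f(t,r,w,F)=f(t,r,w/t,F)$ into equation (\ref{1}). A direct computation gives
\begin{equation*}
\partial_t \hat f + \hat\alpha\,\partial_r \hat f + \bigl(t^{-1}w - t\hat\beta\bigr)\,\partial_w \hat f = 0,
\end{equation*}
where $\hat\alpha$ and $\hat\beta$ denote $\alpha,\beta$ with $w$ replaced by $w/t$. Using $\dot\lambda = t^{-1}+O(t^{-3})$, $e^{\mu-\lambda}=O(t^{-2})$, $\mu'=O(t^{-3})$, and $\hat V=O(1)$ on the support of $\hat f$ (where $w^2+F\le C$), the cancellation $t^{-1}-\dot\lambda=O(t^{-3})$ yields
\begin{equation*}
\hat\alpha = O(t^{-3}), \qquad t^{-1}w - t\hat\beta = w(t^{-1}-\dot\lambda) - t\,e^{\mu-\lambda}\mu'\hat V = O(t^{-3}).
\end{equation*}
By Lemma \ref{fff}, the analogous bounds on derivatives of $\lambda$, $\mu$ and of $\hat V$ on the compact support propagate to give $\|\hat\alpha\|_{C^k} + \|t^{-1}w-t\hat\beta\|_{C^k} \le C_k t^{-3}$ for every $k$. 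This cancellation is the crucial point; without it one would only get $O(t^{-1})$, which is not integrable in time.

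Next I would convert the $L^2$-type control of Lemma \ref{VVV} into $C^l$ control of $\hat f$. The weights $t^{-2m}$ in $\mathcal E_l$ are exactly those produced by the chain rule under $w\mapsto w/t$: substituting $u=tw$ one checks that $t^{-2m}\int f_{n,m}^2\,dr\,dw\,dF = t^{-1}\int |\partial_r^n\partial_u^m \hat f|^2 dr\,du\,dF$, so $t\mathcal E_l$ bounded translates into a uniform $H^l$-bound on $\hat f$ on the compact set $\mathbb{S}\times\{w^2+F\le C\}$ (where one may treat $\mathbb S$ as compact by periodicity). Sobolev embedding on this compact region then gives, for every $l\in\mathbb N$, a uniform-in-$t$ bound $\|\hat f(t,\cdot)\|_{C^l}\le C_l$.

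Combining the two previous steps, I would estimate $\|\partial_t \hat f\|_{C^l}$ by Leibniz-differentiating the PDE: each derivative either falls on a coefficient (which, together with all its derivatives, decays at rate $t^{-3}$) or on $\hat f$ (controlled by the uniform $C^{l+1}$-bound from the previous step). Hence $\|\partial_t \hat f(t,\cdot)\|_{C^l}\le C_l t^{-3}$. Since $t^{-3}$ is integrable on $[t_1,\infty)$, the family $\{\hat f(t,\cdot)\}_{t\ge t_1}$ is Cauchy in $C^l$, so there is a limit $\hat f_\infty \in C^l$ with
\begin{equation*}
\hat f(t,\cdot) - \hat f_\infty = -\int_t^\infty \partial_s \hat f(s,\cdot)\,ds, \qquad \|\hat f(t,\cdot)-\hat f_\infty\|_{C^l}\le C_l t^{-2}.
\end{equation*}
Non-negativity of $\hat f_\infty$ is inherited from $\hat f\ge 0$, and its compact support in $\mathbb R^2\times(0,\infty)$ follows from the hypothesis $w^2+F\le C$ on the support of $\hat f$ (uniform in $t$). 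The main obstacle, as noted, is verifying the cancellation producing the $O(t^{-3})$ coefficient of $\partial_w \hat f$; once this is in hand the rest is a standard combination of energy estimates, Sobolev embedding, and time integration.
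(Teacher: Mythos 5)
Your proposal is correct and follows essentially the same route as the paper: derive the transport equation satisfied by $\hat{f}$, exploit the cancellation $\dot{\lambda}-t^{-1}=O(t^{-3})$ so that both coefficients decay integrably, convert the boundedness of $t\mathcal{E}_l$ from Lemma \ref{VVV} into uniform control of the derivatives of $\hat{f}$ via the same change of variables $u=tw$, and integrate $\partial_t\hat{f}=O(t^{-3})$ in time to obtain $\hat{f}_{\infty}$ with rate $t^{-2}$. The only (immaterial) discrepancy is that the paper records the coefficient of $\partial_r\hat{f}$ as $O(t^{-4})$ where you find $O(t^{-3})$; both are integrable and the conclusion is unaffected.
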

\begin{proof}
 Having in mind the notation (\ref{notation}) and (\ref{notation2}):
  \begin{eqnarray*}
   \Upsilon = \int^1_0 \int^{\infty}_0 \int^{\infty}_{-\infty} \vert\hat{f}_{n,m}(t,r,w,F) \vert^2 drdwdF=t^{-2m}\int^1_0 \int^{\infty}_0 \int^{\infty}_{-\infty} \vert{f}_{n,m}(t,r,wt^{-1},F) \vert^2 drdwdF.
  \end{eqnarray*}
Changing variables we obtain
\begin{eqnarray*}
 \Upsilon = t^{-2m+1} \int^1_0 \int^{\infty}_0 \int^{\infty}_{-\infty} \vert{f}_{n,m}(t,r,w,F) \vert^2 drdwdF,
\end{eqnarray*}
which has the consequence due to Lemma \ref{VVV} that
\begin{eqnarray}\label{step1}
 \Upsilon \leq C E_l,
\end{eqnarray}
where $n+m\leq l$. By definition of $\hat{f}$: 
\begin{eqnarray*}
 \partial_t \hat{f}(t,r,w,F)=(\partial_t f)(t,r,t^{-1}w,F)-t^{-2}w \partial_w f(t,r,t^{-1}w,F),
\end{eqnarray*}
and using the Vlasov equation for $f$:
\begin{eqnarray*}
 \partial_t \hat{f}(t,r,w,F)=-\frac{we^{\mu-\lambda}}{\sqrt{t^2+w^2+F}}\partial_r\hat{f}(t,r,w,F)+[(\dot{\lambda}-t^{-1})w+e^{\mu-\lambda}\mu'\sqrt{t^2+w^2+F}]\partial_w \hat{f}(t,r,w,F).
\end{eqnarray*}
The terms before the partial derivatives are of order $O(t^{-4})$ and $O(t^{-3})$ respectively. Together with (\ref{step1}) and Lemma \ref{fff} this implies
that
\begin{eqnarray*}
  \left\Vert \partial_t \hat{f}(t,\cdot)\right\Vert_{C^l[\mathbb{S}\times\mathbb{R} \times (0, \infty)]} \leq C_l t^{-3},
\end{eqnarray*}
and the theorem follows.
\end{proof}

\section{Stability of surface symmetric solutions}

In the following theorem, the subscript "$bg$" refers to the background solution. For some intuition we refer to Section 7.6 of \cite{Hans}.
\begin{thm}
 Consider a surface symmetric solution to the Einstein-Vlasov system with a positive cosmological constant $\Lambda$. Choose coordinates such that the corresponding metric 
 takes the form (\ref{metric}) on $I \times\mathbb{S} \times S_K$, where $I=(t_0,\infty)$.  Choose a $t\in I$ and let $i: \mathbb{S} \times S_K \rightarrow I \times \mathbb{S} \times S_K$ be given by 
 $i(\bar{x})=(t,\bar{x})$. Let $\bar{g}_{bg}=i^*g$ and let $\bar{k}_{bg}$ denote the pullback of the second fundamental form induced on $i(\mathbb{S} \times S_K)$ by $g$. Let 
 \begin{eqnarray*}
  \bar{f}_{bg}=i^*(f \circ \pr^{-1}_{i(\mathbb{S} \times S_K)}).
 \end{eqnarray*}
Make a choice of $z$, a choice of norms as in Definition \ref{norm} and a choice of Sobolev norms on tensorfields on $\mathbb{S} \times S_K$. Then there is an $\epsilon>0$ such that 
if $(\mathbb{S} \times S_K,\bar{g},\bar{k},\bar{f})$ are initial data for (\ref{Einstein})-(\ref{Vlasov}) with $\bar{f}\in \mathcal{F}^{\infty}_z(T\Sigma)$, satisfying
\begin{eqnarray*}
 \Vert \bar{g}-\bar{g}_{bg} \Vert_{H^5} + \Vert \bar{k}-\bar{k}_{bg} \Vert_{H^5}+ \Vert \bar{f}-\bar{f}_{bg} \Vert_{H^4_z} \leq \epsilon,
\end{eqnarray*}
then the maximal globally hyperbolic development of the initial data is future causally geodesically complete.
\end{thm}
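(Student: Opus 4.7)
The plan is to reduce the statement to an application of Theorem 7.16 of \cite{Hans}, using Cauchy stability to bridge the perturbed initial slice to a sufficiently late slice on which the hypotheses of that theorem can be verified locally on the background.

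First I would fix a late time $t_*\in I$, to be chosen. By Theorem \ref{geo}, the no-hair analysis of Section 5, and Theorem \ref{support}, the induced metric, second fundamental form, and rescaled distribution function $\hat f$ of the background solution converge, in the geodesic normal coordinates $\bar{\mathrm{x}}$ constructed there, to the de Sitter geometry and to a stationary limit $\hat f_\infty$ in every $C^N$-norm at a rate controlled by negative powers of $t_*$. Choosing $t_*$ large enough, these estimates provide, on coordinate neighborhoods of uniform size (in the rescaled variables), precisely the local smallness conditions on the geometry and on the kinetic data that appear in the hypotheses of Theorem 7.16 of \cite{Hans}.

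Next I would invoke Cauchy stability for the Einstein--Vlasov system with a cosmological constant (the statement underlying the unique MGHD of Definition \ref{id} together with continuous dependence in the norms of (\ref{norm})). By choosing the parameter $\epsilon$ in the smallness assumption small enough, depending on $t_*$, the MGHD of the perturbed data contains an embedded copy $\tilde{\Sigma}_{t_*}$ of $\mathbb{S}\times S_K$ whose first and second fundamental forms and distribution function are arbitrarily close, in the relevant Sobolev norms, to those of the background on $\Sigma_{t_*}$. Covering $\tilde{\Sigma}_{t_*}$ by finitely many coordinate neighborhoods (using compactness of $\mathbb{S}\times S_K$), the hypotheses of Theorem 7.16 of \cite{Hans} are satisfied in each neighborhood. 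That theorem then yields future causal geodesic completeness of the portion of the development in the causal future of each neighborhood; since every future directed inextendible causal curve from $\tilde{\Sigma}_{t_*}$ is, by the asymptotic argument preceding Theorem \ref{geo}, confined to such a future cone, the entire MGHD is future causally geodesically complete.

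The main obstacle is the matching of norms: one must verify that the decay rates produced by Lemma \ref{fff}, Theorem \ref{geo} and Theorem \ref{support} are strong enough, in precisely the weighted momentum Sobolev spaces and in the Sobolev spaces on the spatial slices used in \cite{Hans}, to trigger the black-box stability result. This is the reason the higher-order estimates of Sections 4 and 6 were required, and in particular why the $\hat f$ rescaling of Theorem \ref{support} is essential: it converts the decay of $f$ in the physical momentum variable $w$ into bounded, stationary-like behaviour in the rescaled variable that is compatible with the distance concepts of \cite{Hans}. Once the norm bookkeeping is carried out, the theorem follows.
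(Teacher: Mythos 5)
Your proposal follows essentially the same route as the paper: verify the hypotheses of Theorem 7.16 of \cite{Hans} locally at late times on the background, using the cosmic no-hair results of Section 5 for the geometry and Theorem \ref{support} for the distribution function, and then invoke Cauchy stability to transfer the conclusion to the perturbed data. The paper's proof consists essentially of the ``norm bookkeeping'' you defer at the end --- introducing the rescaled geodesic normal coordinates $\bar{\mathrm{y}}=e^{-L}t\bar{\mathrm{x}}$ and estimating the weighted momentum integral $\Pi$ with the choices $\omega=K+\ln t$ and $k_0=4$ --- so your outline is correct but stops short of the quantitative verification that constitutes the main content of the argument.
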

\begin{proof}
The idea of the proof consists in showing that for large times for every $\bar{x}\in{\mathbb{S}\times S_K}$, a neighbourhood of $\bar{x}$ exists such that assumptions of 
Theorem 7.16 of \cite{Hans} are satisfied and using Cauchy stability, one can conclude that there is an $\epsilon$ with the properties stated in the theorem, cf. Corollary 24.10
of \cite{Hans}.
 Fixing $n$, there is an $\epsilon$ and a constant $C_n$ such that for every $\bar{x}\in{\mathbb{S}\times S_K}$, there are geodesic normal coordinates $\bar{\mathrm{x}}$ 
 on $U=B_{\epsilon}(\bar{x})$ with respect to $\bar{g}_{\infty}$, where distances on $\mathbb{S}\times S_K $ are measured using the topological metric induced by $\bar{g}_{\infty}$.
 Denote by $\bar{g}_{\infty,ij}$ and $\bar{g}^{ij}_{\infty}$ the components of $\bar{g}_{\infty}$ and its inverse with respect to $\bar{\mathrm{x}}$. The derivatives
 of these quantities up to order $n$ with respect to $\bar{\mathrm{x}}$ on $U$ are bounded by $C_n$. The same is true for the derivatives of $\bar{\mathrm{x}}$ and its inverse
 considered as a function of $(r,\theta,\varphi)$. These uniform bounds hold regardless of the base point, cf. Lemma 34.9 of \cite{Hans}. Define $L$ by $e^L=4/H$ 
 and define the coordinates $\bar{\mathrm{y}}=e^{-L}t\bar{\mathrm{x}}$. The range of $\bar{\mathrm{y}}$ is $B_{e^{-L}t\epsilon}(0)$. From now on we assume that $t$ is large
 enough such that $e^{-L}t\epsilon>1$, and we assume the coordinates of $\bar{\mathrm{y}}$ to be defined on the image if $B_1(0)$ under $\bar{\mathrm{y}}^{-1}$. Let $\bar{\mathrm{g}}$ 
 denote the components of $\bar{g}(t,\cdot)$ with respect to the coordinates $\bar{\mathrm{y}}$ and $\bar{\mathrm{g}}_{\infty,ij}$ the components of $\bar{{g}}_{\infty}$ with respect
 to the coordinates $\mathrm{x}$. Due to (\ref{guess}), we have:
 \begin{eqnarray*}
 \vert \partial^{\alpha}(e^{-2L}\bar{\mathrm{g}}_{ij}-\bar{\mathrm{g}}_{\infty,ij}) \circ \bar{\mathrm{y}}^{-1}\vert \leq C_n t^{-3-\vert\alpha \vert},
 \end{eqnarray*}
on $B_1(0)$ for $\vert \alpha \vert\leq n$. By definition:
 \begin{eqnarray*}
  \bar{\mathrm{g}}_{\infty,ij} \circ \bar{\mathrm{y}}^{-1}(0)=\delta_{ij},
 \end{eqnarray*}
and  
 \begin{eqnarray*}
  \bar{y}^{-1}(\bar{\zeta})=\bar{x}^{-1}(e^L t^{-1}\bar{\zeta}).
 \end{eqnarray*}
Thus,
\begin{eqnarray*}
 \vert ( \partial_l \bar{\mathrm{g}}_{\infty,ij}) \circ \bar{\mathrm{y}}^{-1}(u) \vert = \left\vert \sum_{m=1}^n\int^1_0 \frac{d}{ds} [(\partial_m \partial_l\bar{\mathrm{g}}_{\infty,ij})\circ \bar{\mathrm{y}}^{-1}(su)u^m] ds\right\vert \leq C e^{-H\tau},
\end{eqnarray*}
and
\begin{eqnarray*}
 \vert e^{-2L}\bar{\mathrm{g}}_{ij}\circ \bar{\mathrm{y}}^{-1}-\delta_{ij} \vert \leq C e^{-3H\tau}.
\end{eqnarray*}
To conclude, we thus have:
\begin{eqnarray*}
 \Vert e^{-2L} \partial_l \bar{\mathrm{g}}_{ij} \circ \bar{\mathrm{y}}^{-1} \Vert_{C^{n-1}[B_1(0)]} \leq C_n e^{-H\tau},
\end{eqnarray*}
and 
\begin{eqnarray*}
 \Vert (\bar{\mathrm{k}}_{ij}- H \bar{\mathrm{g}}_{ij}) \circ \bar{\mathrm{y}}^{-1} \Vert_{C^{n-1}[B_1(0)]} \leq C_n e^{-2H\tau},
\end{eqnarray*}
where $\bar{\mathrm{k}}_{ij}$ denote the components of $\bar{k}$ with respect to the coordinates $\bar{\mathrm{y}}$. Turning to the distribution function what remains to be estimated
(cf. (7.34) and (7.37) of \cite{Hans}) is 
\begin{eqnarray*}
 \Pi=\sum_{\vert \alpha \vert +b\leq k_0 }\int^{\infty}_{-\infty}\int^{\infty}_{0} \int_{\bar{y}(U)} e^{-2b\omega} (1+e^{2\omega}w^2+e^{2\omega}t^{-2}F)^{z+b}\vert\partial^{\alpha}_{\bar{\zeta}}\partial_{w}^b \bar{\mathrm{f}}_{\bar{\mathrm{y}}} \vert^2(\bar{\zeta},w,F) d\bar{\zeta}dFdw,
\end{eqnarray*}
where
\begin{eqnarray*}
 \bar{\mathrm{f}}_{\bar{\mathrm{y}}}=\bar{f}[\bar{\mathrm{y}}^{-1}(\bar{\zeta}),w,F]=\bar{f}[\bar{\mathrm{x}}^{-1}(e^Lt^{-1}\bar{\zeta}),w,F].
\end{eqnarray*}
Now following exactly the same arguments as in \cite{HakanHans}, using our Theorem \ref{support},
choosing $\omega=K+K_{\mathrm{Vl}}$ with $K_{\mathrm{Vl}}=\ln t$, $k_0=4$, $z>\frac52$ freely, we obtain that
\begin{eqnarray*}
 \Pi \leq C_{z,\alpha,b}t^{-2\vert \alpha \vert-3}\leq H^2 \epsilon^{\frac52}e^{-3K/2}t^{-1},
\end{eqnarray*}
for large $t$ large enough, where the last inequality is the one we need in order to be able to apply Theorem 7.16 of \cite{Hans}. Since we are considering the case of a cosmological
constant the function $V$ in the theorem can be set equal to $\Lambda(1+\phi^2)$.
\end{proof}
In addition to the conclusions of the theorem one obtains via Theorem 7.16 of \cite{Hans} a detailed description of the asymptotics. In particular the cosmic no hair theorem holds
for the perturbed solution.

\section{Outlook}
We have treated the special case of a vanishing scalar field of Theorem 7.16 of \cite{Hans}. An obvious generalization is thus the extension to a non-vanishing scalar field, 
a starting point could be \cite{Svedberg,Tegan} together with the results of \cite{Hans}. Another natural question is to consider the Einstein-Maxwell system, i.e. charged particles,
cf. for results in this direction \cite{Luo, Svedberg,Tchap3}. It is also of interest to consider the Einstein-Euler system, where shocks may play a role, cf. \cite{LeFloch3,LeFloch2,Tchap4}.

\section{Acknowledgements}
The author thanks Hans Ringstr\"{o}m for suggesting this problem, for several helpful discussions, reading parts of the manuscript of this work 
and the opportunity to read the manuscripts of \cite{HakanHans,Hans}.
He is grateful to the G\"{o}ran Gustafsson Foundation for Research in Natural Sciences and Medicine for their financial support. This material is partly based 
upon work supported by the National Science Foundation under Grant No. 0932078 000, 
while the author was in residence at the Mathematical Sciences Research Institute in 
Berkeley, California, during October 2013. He is also grateful for the invitation of the Institute of Theoretical Physics, Charles University Prague, where the author spent part of November 2013. He is currently supported by the Irish Research Council. 
\bibliographystyle{abbrv}

\end{document}